\documentclass[11pt]{article}

\usepackage[margin=1in]{geometry}
\usepackage{amsmath,amssymb,amsthm,mathtools}
\newtheorem{theorem}{Theorem}
\newtheorem{claim}{Claim}
\newtheorem{lemma}{Lemma}
\newtheorem{remark}{Remark}
\usepackage{enumitem}
\usepackage{hyperref}
\usepackage[ruled,vlined,linesnumbered]{algorithm2e}
\usepackage{tikz}

\definecolor{layerone}{RGB}{86,180,233}
\definecolor{layertwo}{RGB}{230,159,0}
\definecolor{layerthree}{RGB}{0,158,115}
\definecolor{layerfour}{RGB}{204,121,167}
\definecolor{layerfive}{RGB}{213,94,0}

\newcommand{\E}{\mathbb{E}}
\newcommand{\Prb}{\mathbb{P}}
\newcommand{\occ}{\operatorname{occ}}
\newcommand{\poly}{\operatorname{poly}}
\newcommand{\wt}{\widetilde}
\newcommand{\del}{\delta}
\newcommand{\hx}{\hat x}
\newcommand{\defn}[1]{\emph{#1}}
\usepackage{todonotes}

\title{Linear Probing with Non-Greedy Insertions}
\newif\ifanonymous
\anonymousfalse 
\ifanonymous
\author{Anonymous submission}
\hypersetup{pdfauthor={}}
\else
\author{Andrew Krapivin\thanks{Carnegie Mellon University.}
\and William Kuszmaul\thanks{Carnegie Mellon University.}
\and Yixuan Wang\thanks{Carnegie Mellon University.}}
\hypersetup{pdfauthor={Andrew Krapivin, William Kuszmaul, Yixuan Wang}}
\fi
\date{}

\begin{document}
\maketitle

\begin{abstract}
Linear probing hash tables classically use a \emph{greedy} insertion strategy, placing a key $u$ in the first available position out of $h(u), h(u) + 1, h(u) + 2, \ldots$. If the hash table is filled to $1 - 1/x$ full, this results in $\Theta(x^{2})$ worst-case expected insertion time.

In this paper, we introduce interlinear probing, a simple \emph{non-greedy} insertion strategy that does better without requiring elements to be reordered within the table over time. Given $x$ in advance, the algorithm brings the worst-case expected insertion time (and therefore also the worst-case expected positive query time) down to $O(x \log x)$. We also extend the algorithm to support \emph{negative} queries in worst-case expected time $O(x (\log x)^2)$. Moreover, our construction achieves $O(x)$ amortized insertion time, matching that of standard linear probing.

Finally, we prove a lower bound showing that any stable insertion strategy for linear probing must incur $\Omega(x \sqrt{\log x})$ worst-case expected time. Combined, our results establish that the optimal worst-case expected insertion time among stable insertion strategies is $x (\log x)^{\Theta(1)}$.
\end{abstract}

\section{Introduction}

The \emph{linear-probing hash table} is one of the oldest and most widely used data structures in computer science. In its simplest form, the hash table works as follows. Elements are stored in an array of some size $n$, and insertions place an element $u$ in the first available position from the sequence $h(u), h(u) + 1, h(u) + 2, \ldots \bmod n$, where $h(u)$ is a random hash. Queries can then locate an element by examining the same probe sequence $h(u), h(u) + 1, h(u) + 2, \ldots$ until they encounter the position containing $u$.

What makes linear probing appealing is its \emph{data locality}---each operation accesses a single small region of contiguous memory \cite{richter2015seven}. The main drawback of linear probing, on the other hand, is \emph{clustering} \cite{Knuth63,benderlinearprobing}. As the hash table fills up, elements have a tendency to cluster together into long runs. The result is that, when the hash table is filled to a load factor of $1 - 1/x$, the expected insertion and query times both grow at a rate of $\Theta(x^{2})$, rather than the rate of $\Theta(x)$ that one might intuitively expect \cite{Knuth63,benderlinearprobing}.

To combat clustering, past works have proposed several techniques for rearranging elements in the hash table over time in order to speed up both insertions and queries. Notably, by keeping the elements in each run in sorted order by hash \cite{AmbleKn74,CelisLaMu85,celis1986robin,benderlinearprobing}, one can reduce the expected query time to $O(x)$; and by strategically rebuilding the hash table at regular intervals, and planting \emph{tombstones} throughout the table during each rebuild, one can bring the expected insertion time down to $O(x)$ as well \cite{benderlinearprobing}.

These techniques rely fundamentally on the ability to move elements around over time. But in many settings \cite{richter2015seven,sandersstability,iceberg,icebergimplementation,maier2022scalable,mccoy2026warpspeed}, it is desirable for a hash table to be stable, meaning that elements do not move around after they are inserted. This raises the following basic question: Is there a way to build a \emph{stable} linear-probing insertion strategy that achieves worst-case $O(x)$ expected-time insertions?

\paragraph{The challenge: \emph{average} insertion time must be $\Omega(x)$. } Part of what makes the above question challenging is the following basic fact: no matter how the $(1 - 1/x)n$ elements are placed in a linear-probing hash table, the \emph{average} distance from each element $u$'s hash $h(u)$ to the position where $u$ resides must be $\Omega(x)$ \cite{peterson57,Knuth63}.\footnote{Concretely, Peterson \cite{peterson57} observed that the sum of these distances is minimized by using the standard ordering; Knuth \cite{Knuth63} then showed that the standard ordering results in average distance $\Theta(x)$.}  It follows that any stable insertion strategy must incur $\Omega(x)$ time, on average, across \emph{all} insertions.

Thus, if we want to achieve worst-case expected time close to $O(x)$, then we must incur expected time $\approx x$ not just for the final insertions, but for \emph{almost all} insertions. Moreover, the early insertions must somehow make \emph{productive use} of their insertion times in a way that allows later insertions to be cheaper. Whether or not such an insertion strategy exists is the main question of this paper.

\paragraph{This paper: upper and lower bounds of $x (\log x)^{\Theta(1)}$. }
For our upper bounds, we assume $x\le n^{0.49}$. Our first result is a simple implementation of stable linear probing that, given $x$ in advance, supports all of the first $(1 - 1/x)n$ insertions (and therefore also positive queries) in worst-case expected $O(x \log x)$ time. The data structure can be viewed as a non-greedy variation of linear probing. As in standard linear probing, queries/insertions still measure their time cost by the number of probes $h(u), h(u) + 1, h(u) + 2, \ldots$ that are performed. But, unlike in traditional linear probing, insertions are non-greedy: an insertion may choose to \emph{skip} over a free slot, in order to leave that slot for future insertions. This allows early insertions to perform extra work (strategically skipping over certain free slots) in order to make later insertions asymptotically cheaper.

Given that the proposed insertion strategy is non-greedy, it is natural to ask whether \emph{negative} queries (queries for elements that are not in the hash table) can still be supported efficiently. In a standard linear-probing hash table, a negative query to an element $u$ can stop as soon as it finds a free slot in the sequence $h(u), h(u) + 1, h(u) + 2, \ldots$. But, with non-greedy insertions, this protocol is no longer valid.

Our second result is an extension of the above data structure to support negative queries in worst-case expected time $O(x (\log x)^2)$. Specifically, we show that if elements are placed into the hash table in the right way, then negative queries can follow a new stopping rule, stopping not when they find a free slot, but instead when a certain more subtle condition is met. This allows for negative queries to complete efficiently even though the insertion strategy is non-greedy.

Our third result further extends the data structure so that the amortized expected insertion cost (and thus the expected cost to perform a positive query on a random element) is $O(x)$. Putting our upper bound results together, we have a non-greedy linear probing scheme, which we call \emph{interlinear probing}, with worst-case expected insertion and positive query times of $O(x\log x)$ and a worst-case expected negative query time of $O(x\log^2 x)$, all while maintaining an amortized expected insertion time of $O(x)$.

\begin{theorem}
\label{thm:main-amortized}
There is an implementation of linear probing without reordering, namely interlinear probing, that has the following properties. If $x\le n^{0.49}$ is known in advance, then each of the first $(1-1/x)n$ insertions completes in worst-case expected time $O(x\log x)$, each positive query completes in worst-case expected time $O(x\log x)$, and each negative query completes in worst-case expected time $O(x\log^2 x)$. Moreover, the amortized expected insertion cost and the average expected positive-query cost are $O(x)$.
\end{theorem}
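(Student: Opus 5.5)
We will prove the theorem by constructing interlinear probing in three layers, each supplying one of the guarantees, and checking that they are compatible.

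\textbf{Phase structure.} Split the $(1-1/x)n$ insertions into $L=\Theta(\log x)$ phases. Phase $i$ takes the load from $1-2^{-(i-1)}$ to $1-2^{-i}$, ending when the free fraction reaches $1/x$. Partition the array into consecutive \emph{blocks}, with a block length $b_i$ attached to phase $i$ that shrinks geometrically in $i$. The largest length is $\Theta(x\log x)$ or so, and the smallest is $\Theta(x)$. In each block, during phase $i$, reserve the free slots whose relative positions lie in a sparse set $R_i$ of density about $2^{-i}$. An insertion during phase $i$ probes $h(u),h(u)+1,\dots$ and takes the first free slot that is \emph{not} reserved for a later phase. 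So reserved slots are skipped, and this is the non-greedy step.

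\textbf{Insertion bound.} The heart of the argument is a lemma of the following form: at every time during phase $i$, the set of slots available to phase $i$ has density bounded below by roughly $2^{-i}/\log x$ in every window of length $\Omega(x)$, except with small probability. We would prove it in three steps. First, analyse each phase as classical linear probing run on a thinned array of available slots, at load $1-\Theta(1/\log x)$ relative to that array. Second, bound the random fluctuation of hash counts in windows by Chernoff plus a union bound; this is where $x\le n^{0.49}$ is used, since windows have length polynomially smaller than $n$. Third, control the spillover from earlier phases with the standard linear-probing run-length tail bound: a run of length $\ell$ at relative load $1-\delta$ has probability about $e^{-\Omega(\delta^2\ell)}$. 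The expected displacement of a phase-$i$ insertion is then $O(\text{gap between available slots}\times \text{relative cluster length})$. If the reservation is tuned so that phase $i$ sees relative load $1-\Theta(1/\log x)$ in a thinned array whose spacing is $\Theta(2^i)$, the cost is $2^i\cdot O(\log x)$. That bound is $O(x\log x)$ only at the last phase and is too large to be uniform, so the tuning must instead be chosen to make every phase cost $O(x\log x)$. Concretely, each phase skips reserved slots at a cost of $O(x)$ per level, over $O(\log x)$ levels. Positive queries follow the same probe path as the insertion, so they inherit the same bound. Summing over phases, with phase $i$ containing a $2^{-i}$ fraction of insertions, an early phase costs about $x$ per insertion while the $\log x$ factor is concentrated on a small fraction. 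Using a geometric schedule with costs $O(x(1+i/L)\cdot\ldots)$ then gives the $O(x)$ amortized bound; if this calculation fails we would add a third refinement in which early phases use greedy linear probing up to load $1-1/\sqrt{x}$.

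\textbf{Negative queries, the main obstacle.} The stopping rule we would use: a query for $u$ stops once it has passed, for every level $i$, a free slot that is available at level $i$, i.e.\ one that would have received $u$ had $u$ been inserted in phase $i$. This is correct because an inserted $u$ always lands before the first free slot available at its own level, so we must check at least one such slot per level. The cost is $\sum_i$ (distance to the first free level-$i$ slot), and each term is $O(x\log x)$ by the density lemma. That gives $O(x\log^2 x)$, i.e.\ $L$ times the insertion bound. The hard part is showing that free level-$i$ slots exist with enough density \emph{simultaneously} at all levels and at all times, including late times after the reservations of level $i$ have been consumed. For those consumed levels we would instead argue that the reserved slots of later levels certify absence. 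This forces a monotonicity invariant: a slot free at a later level is treated as free for all earlier levels. Proving that invariant is preserved, together with the tail bounds under the dependence between phases, is the crux; we would handle it by induction on phases while conditioning on the good window-density events.
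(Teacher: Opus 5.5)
Your proposal has genuine gaps. It describes a family of possible designs rather than one algorithm whose guarantees are derived, and the two hardest steps are exactly the ones left open.

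\textbf{Insertions and amortized cost.} The ``density lemma'' carries the whole insertion bound, but it is only asserted. The calculation behind it uses $\Theta(1/\delta)$, which is the \emph{amortized} cost of filling to relative load $1-\delta$. The cost of a single insertion at that load is $\Theta(1/\delta^2)$; you note yourself that the tuning does not close. With the correct formula, consider a level whose available slots have spacing $\approx 2^i$ and which must absorb every insertion of its phase. It can be filled only to empty fraction $\delta\gtrsim\sqrt{2^i/(x\log x)}$ before single insertions exceed $O(x\log x)$. That leaves about $n/\sqrt{2^i x\log x}$ free slots, which already for $i=1$ is $\gg n/x$. These leftovers must then be reused by later phases, and nothing in your sketch analyses that.

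The missing idea is the paper's randomized routing. In phase $i$ a key goes to the active layer $L_i$ only with probability $p_i=\min\{1,\lambda(\widehat x/s_i)^{1/2}\delta_i^{3/2}\}$, and otherwise to the overflow layer $L_{i+1}$. Three things follow:
\begin{itemize}
\item The expensive branch contributes $p_i\cdot O(s_i/\delta_i^2)=O(\sqrt{\widehat x s_i/\delta_i})=O(\widehat x)$, even though $L_i$ is driven all the way down to empty fraction $s_i/\widehat x$.
\item The overflow count is a sum of independent geometric variables with $\sum_r 1/p_r\le 3N_i/\lambda$, so $L_{i+1}$ stays at most half full with high probability (Claim~\ref{claim:overflow-count}).
\item Each layer leaves only $n/\widehat x$ free slots.
\end{itemize}
Your amortized step is also open (``if this calculation fails\ldots''). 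The layered structure alone has amortized cost $\Theta(\widehat x)$. The paper fixes this by merging the dense layers into one megalayer of density $1-1/\log x$, filled to $n/(2x)$ free slots with routing tuned to $\widehat x_1=C_0x\log^2(2x)$. It then recurses on the sparse remainder with $x_2=2x/\log x$.

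\textbf{Negative queries.} Here the gap is sharper. A slot that a level-$i$ insertion is allowed to skip can never certify that $u$ is absent from level $i$, because $u$ may have been placed beyond it. So your fix for consumed levels is unsound: letting free slots reserved for later levels certify absence for earlier levels, via your ``monotonicity invariant,'' fails for that reason. Making it sound would require level-$i$ insertions not to skip those slots, which removes exactly the reservations that make the scheme non-greedy. Without that fix, certificates consisting only of free slots are too far away. Once the table holds only $n/x$ free slots, the distance to a level's next free slot is comparable to a final-load insertion; in the paper's layers this is $s_i/(\delta_i^\star)^2=\widehat x^2/s_i$.

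The paper's key idea is that \emph{occupied} slots can serve as certificates. Each $L_i$ is split into alternating sub-layers by a hash $g$:
\begin{itemize}
\item Early insertions go to $L_i^{g(u)}$.
\item Late insertions go to $L_i^{2-g(u)}$, and are routed to $L_i$ only when $f(u)\in[p,2p]$.
\end{itemize}
A query may therefore stop in $L_i^{g(u)}$ at a free slot or at any $v$ with $g(v)\neq g(u)$. It may stop in $L_i^{2-g(u)}$ at a free slot or at any $v$ with $g(v)=g(u)$ and $f(v)<f(u)/2$. Each certification distance is bounded by that of an insertion at the moment $p_i$ fell below $1/2$, respectively below $f(u)/4$. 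Integrating over $f(u)\ge\lambda s_i/\widehat x$ gives $O(\widehat x)$ per layer, and summing over $O(\log x)$ layers gives $O(x\log^2 x)$.
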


Finally, the most technical result of the paper is a lower bound showing that an insertion bound of $O(x)$ is \emph{not} possible. Concretely, we prove that any stable insertion strategy for linear probing must incur $\Omega(x \sqrt{\log x})$ worst-case expected time for some insertion. Combined with our upper bounds, this establishes that the optimal worst-case expected insertion time among stable insertion strategies is $x (\log x)^{\Theta(1)}$.

An interesting feature of our lower bound is that it applies not just to stable insertion strategies, but to any insertion strategy where the cost of moving an element from a slot $i$ to a slot $j$ is measured as the cyclic distance between \(i\) and \(j\) (here we treat element $u$ as initially belonging to slot $h(u)$). Any such insertion strategy must incur $\Omega(x \sqrt{\log x})$ worst-case expected insertion time.


\paragraph{Other related work. } One of the major questions that theoreticians care about in the study of hash table design is the question of whether one can build high-performance hash tables that are stable \cite{yao1985uniform,farach2024optimal,farach2026greedy,demaine2005dynamic,iceberg}. Stability is also important from a practical standpoint, as it allows one to avoid the overheads of moving elements around \cite{icebergimplementation,richter2015seven}, it enables simpler and faster concurrency \cite{mccoy2026warpspeed,icebergimplementation,maier2022scalable}, and it allows users to maintain pointers to elements without worrying about those pointers becoming invalid \cite{maier2022scalable,sandersstability,cplusplus1,cplusplus2,abseil,F14}. Stability has been studied extensively both in hash tables (like those in this paper) that are insertion-only \cite{yao1985uniform,farach2024optimal,farach2026greedy} and in hash tables that support both insertions and deletions \cite{demaine2005dynamic,iceberg}. The basic question that we consider, however, of whether stable linear probing insertions can achieve $o(x^2)$ worst-case expected time, does not appear to have been considered before.

In concurrent work \cite{zamirlocality}, Zamir studies the optimal data locality of insertions in any stable open-addressed hash table. In addition to several other results, \cite{zamirlocality} presents a hash table that achieves a related guarantee to the one in this paper: the construction can be viewed as a non-greedy \emph{bi-directional} implementation of linear probing (meaning queries must search in both directions) with worst-case expected time $O(x \log^3 x)$. The paper \cite{zamirlocality} also poses as an open question whether a solution with worst-case expected time $O(x)$ may be possible. Our lower bound offers a partial resolution to this question, showing that such a solution is not possible in any hash table that selects its hashes $h(u)$ uniformly at random. It seems likely that, with additional effort, the lower bound can be extended to apply also to hash tables that select their hashes non-uniformly (such hash tables are included in Zamir's analyses), in which case one would obtain a full negative resolution to Zamir's question.

 \section{Preliminaries}

Throughout the paper, we consider a circular hash table with $n$ slots, and we assume access to a fully random hash function $h: U \rightarrow [n]$, where $U$ is the universe of keys. We require insertions to use linear probing, meaning that the insertion of a key $u$ examines positions $h(u), h(u) + 1, h(u) + 2, \ldots \bmod n$ and eventually places the key $u$ in one of those positions. Unlike in traditional linear probing, however, we do not require the insertion to stop at the first free slot it finds. In other words, insertions are allowed to be non-greedy. We consider the scenario in which the hash table receives $(1 - 1/x)n$ insertions, with $x$ known in advance. We will assume for simplicity that $n$ is a power of $2$; with careful rounding, all of the same arguments also apply for arbitrary $n$.

\section{The Basic Insertion Algorithm}
In this section, we present the basic insertion algorithm that interlinear probing uses, which achieves the worst-case expected insertion and positive query times of Theorem~\ref{thm:main-amortized}, and which we will later extend to optimize negative queries and amortized insertions. Throughout the paper, we will make use of two global constants $C_0, \lambda$, both of which are taken to be sufficiently large positive constants.

Identify the slots with $\{0,1,\ldots,n-1\}$. For $1\le i\le \log_2 n$, let layer $L_i$ contain the positions congruent to $2^{i-1}$ modulo $2^i$ (see Figure~\ref{fig:interleaved-layers}). Thus, the layers are disjoint, layer $L_i$ consists of one slot every $s_i=2^i$ positions, and it has size $N_i=n/2^i$. The single remaining position is placed in a terminal residual layer, which is never reached before the target load under our assumptions. Every insertion will be \emph{routed} to a layer, meaning that the insertion greedily places itself in the first free slot that it sees in that layer, skipping over any free slots that it sees in other layers.

\begin{figure}[t]
\centering
\begin{tikzpicture}[x=0.58cm,y=0.58cm]
    \foreach \j in {1,3,...,15} {
        \fill[layerone!45] ({\j-1},0) rectangle (\j,1);
        \node at ({\j-0.5},0.5) {$L_1$};
    }
    \foreach \j in {2,6,...,14} {
        \fill[layertwo!45] ({\j-1},0) rectangle (\j,1);
        \node at ({\j-0.5},0.5) {$L_2$};
    }
    \foreach \j in {4,12} {
        \fill[layerthree!45] ({\j-1},0) rectangle (\j,1);
        \node at ({\j-0.5},0.5) {$L_3$};
    }
    \fill[layerfour!45] (7,0) rectangle (8,1);
    \node at (7.5,0.5) {$L_4$};
    \fill[layerfive!45] (15,0) rectangle (16,1);
    \node at (15.5,0.5) {$L_5$};

    \draw[thick] (0,0) rectangle (16,1);
    \foreach \j in {1,...,15} {
        \draw (\j,0) -- (\j,1);
    }
    \node at (-0.65,0.5) {$\cdots$};
    \node at (16.65,0.5) {$\cdots$};
\end{tikzpicture}
\caption{A decomposition of the slots into layers. Each slot is labeled and colored according to which layer it belongs to.}
\label{fig:interleaved-layers}
\end{figure}

Define
\[
\widehat x := C_0 x\log(2x),
\]
where $C_0$ is a sufficiently large absolute constant. The algorithm proceeds in phases, where in phase $i$ all insertions are routed to one of the layers $L_i$ or $L_{i + 1}$ (with high probability). During phase $i$, $L_i$ is referred to as the \emph{active layer} and layer $L_{i + 1}$ is referred to as the \emph{overflow layer}.

Let $\occ(L_i)$ denote the number of keys currently stored in layer $L_i$, and let
\[
\delta_i:=\frac{N_i-\operatorname{occ}(L_i)}{N_i}
\]
denote the current empty fraction of $L_i$. During phase $i$, the insertion algorithm routes a key to $L_i$ with probability\footnote{The more natural-looking choice $p_i(\delta_i)=\min\{1,\lambda\widehat x\delta_i^2/s_i\}$ would suffice for positive queries. We make this slightly less natural choice to facilitate efficient negative queries without needing to redo any analysis.}
\begin{equation}
p_i(\delta_i)
:=
\min\left\{
1,
\lambda\left(\frac{\widehat x}{s_i}\right)^{1/2}\delta_i^{3/2}
\right\},
\label{eq:p_i}
\end{equation}
where $\lambda$ is a sufficiently large absolute constant. With the remaining probability, the key is routed to $L_{i+1}$. We end phase $i$ and begin phase $i + 1$ when
\[
\delta_i\le \delta_i^\star,
\qquad
\delta_i^\star:=\frac{s_i}{\widehat x}.
\]

The full insertion algorithm is given in Algorithm \ref{alg:non-greedy-insert}.

\begin{algorithm}[H]
\caption{\(\textsc{InterleavedLayeredInsert}\)}
\label{alg:non-greedy-insert}
\KwIn{A key \(a\), current phase \(i\).}
\While{$\occ(L_i) \ge (1-\delta_i^\star) N_i$}{
    Set \(i\leftarrow i+1\)\;
}
Set
\[
\delta_i\leftarrow \frac{N_i-\operatorname{occ}(L_i)}{N_i},
\qquad
p_i\leftarrow
\min\left\{1,\lambda\left(\frac{\widehat x}{s_i}\right)^{1/2}\delta_i^{3/2}\right\}.
\]\\
With probability \(p_i\), insert \(a\) into \(L_i\) by ordinary linear probing restricted to \(L_i\)\;
Otherwise, insert \(a\) into \(L_{i+1}\) by ordinary linear probing restricted to \(L_{i+1}\)\;
\end{algorithm}

The above algorithm could in principle fail for one of two reasons. The first is that the condition $\occ(L_i) \ge (1-\delta_i^\star) N_i$ is met for all phases $i$ (so the while loop in the insertion protocol loops indefinitely); we will see that, so long as the global load factor is at most $1 - 1/x$, then this failure mode does not occur. The second reason the algorithm could fail is if the layer that an insertion is assigned to is already full. We will show that this happens with probability at most $1 / n^2$---in the $1 / n^2$ failure case, one can afford to complete the insertion in $\Theta(n)$ time using the trivial greedy insertion algorithm.

\section{The Analysis}

In this section, we prove the following theorem:
\begin{theorem}
Assuming $x = o(n / \log^2 n)$, each of the first $(1 - 1/x)n$ insertions completes in worst-case expected time $O(x \log x)$.
\label{thm:main}
\end{theorem}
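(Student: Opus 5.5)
The plan is to bound the expected cost of a single insertion by splitting on which layer it is routed to, and to control the fill level of each layer using concentration. First I would isolate the cost of ordinary linear probing restricted to one layer. The routing coins are independent of the hashes, so the set of keys that end up in $L_j$ is a set of keys with independent uniform hashes. Each hash $h(u)$ determines a uniformly random starting slot within $L_j$, namely the first slot of $L_j$ at or after $h(u)$. Linear probing restricted to $L_j$ is therefore ordinary greedy linear probing on a cyclic table of $N_j$ slots. The set of occupied slots of such a table does not depend on the insertion order. Hence Knuth's classical bound applies: an insertion into $L_j$ while its empty fraction is $\delta_j$ probes $O(1/\delta_j^2)$ slots of $L_j$ in expectation, and therefore $O(s_j/\delta_j^2)$ actual positions. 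This needs $N_j\delta_j$ not too small, which I would verify from $N_j\delta_j^\star=n/\widehat x$ together with the assumption $x=o(n/\log^2 n)$.

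\textbf{Cost in the active layer.} Fix an insertion occurring in phase $i$. When it is routed to $L_i$, its expected cost is at most
\[
p_i\cdot O\!\left(\frac{s_i}{\delta_i^2}\right)\le O\!\left(\lambda (\widehat x s_i)^{1/2}\delta_i^{-1/2}\right)\le O(\lambda\widehat x),
\]
where the last step uses $\delta_i> \delta_i^\star = s_i/\widehat x$ during phase $i$. This gives $O(x\log x)$.

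\textbf{Cost in the overflow layer.} Here I need to show that $L_{i+1}$ stays at least, say, half empty throughout phase $i$. Then the cost is $O(s_{i+1})$, and it remains to show $s_{i+1}=O(\widehat x)$.
\begin{enumerate}[label=(\roman*)]
\item Layer $L_{i+1}$ receives keys only during phases $i$ and $i+1$, so at the start of phase $i$ it is empty.
\item While $\delta_i$ is above $(s_i/(\lambda^2\widehat x))^{1/3}$, we have $p_i=1$ and nothing overflows.
\item Below that threshold, each insertion into $L_i$ lowers $\delta_i$ by $1/N_i$. The expected number of overflow insertions per unit decrease of $\delta_i$ is $(1-p_i)/p_i$. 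Integrating gives
\[
N_i\int_{\delta_i^\star}^{1}\frac{d\delta}{\lambda(\widehat x/s_i)^{1/2}\delta^{3/2}} \le \frac{2N_i}{\lambda}\left(\frac{s_i}{\widehat x}\right)^{1/2}(\delta_i^\star)^{-1/2}=\frac{2N_i}{\lambda}=\frac{4N_{i+1}}{\lambda}.
\]
\item For $\lambda$ large, $L_{i+1}$ is therefore at most $O(1/\lambda)$-full at the end of phase $i$ in expectation.
\item A martingale or Chernoff argument over the sequence of routing coins upgrades this to high probability, with failure probability $1/n^3$, using $N_{i+1}\ge n/\widehat x\gg \log n$.
\end{enumerate}
The failure events contribute $O(n)\cdot n^{-2}$ to the expectation, which is negligible.

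\textbf{Bounding the phase index.} Every completed phase $j$ strands at most $\delta_j^\star N_j=n/\widehat x$ empty slots. Phases with $s_j\ge\widehat x$ have $\delta_j^\star\ge 1$ and end immediately, so at most $\log_2\widehat x=O(\log x)$ phases strand slots. The total number of stranded empties is then $O(\log x)\cdot n/(C_0x\log(2x))\le n/(4x)$ once $C_0$ is large. Now suppose phase $i$ is reached. Layers $L_1,\dots,L_{i-1}$ are full up to these stranded slots. Comparing with the global load $(1-1/x)n$ then forces $n/2^{i-1}\ge n/(2x)$, that is $s_i\le 4x\le\widehat x$. This also shows the while loop terminates: the last phase always has room.

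\textbf{Expected main obstacle.} I expect the hardest step to be the concentration argument for the overflow layer. The routing probabilities depend adaptively on $\delta_i$, so I would handle it with a time-changed martingale, indexed by decrements of $\delta_i$, rather than a plain Chernoff bound. The same step must also ensure that $L_i$ itself never becomes full before $\delta_i\le\delta_i^\star$, which is automatic from the phase rule.
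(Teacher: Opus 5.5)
Your proposal is correct and follows essentially the same route as the paper. The shared steps are: Knuth's $O(s_j/\delta_j^2)$ cost for probing restricted to a layer; the bound $p_i\cdot s_i/\delta_i^2=O(\sqrt{\widehat x s_i/\delta_i})=O(\widehat x)$ for the active layer; the estimate $\sum_r 1/p_r=O(N_i/\lambda)$ showing that the overflow layer stays at most half full; $s_{i+1}=O(\widehat x)$; and a stranded-empty-slot count for correctness. The concentration step you defer is carried out in the paper exactly by your re-indexing over decrements of $\delta_i$: the overflows between consecutive insertions into $L_i$ are independent geometric variables $G_r$, and an MGF bound with $\theta=p_{\min}/8$ is applied to their sum. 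Note that the resulting tail exponent is $\Theta(p_{\min}N_i)=\Theta(n/\widehat x)$, not anything of order $N_{i+1}$. The late geometrics have mean $\Theta(\widehat x/(\lambda s_i))$, so the tail cannot be smaller than that exponent allows. This still suffices, because $n/\widehat x=\omega(\log n)$.
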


\paragraph{Verifying that the algorithm supports load factor $1-1/x$. } Before we can analyze expected insertion time, we must first verify correctness. By construction, the algorithm stops supporting insertions once it gets to a phase $i$ where $\delta_i^\star > 1$ (at which point phase $i$ and all subsequent phases are empty by construction). We therefore must verify that the algorithm reaches global load factor $1-1/x$ before it gets to any such phase.

The total number of free slots when the algorithm completes (i.e., when it has finished all phases) is at most
\begin{align*}
1+\sum_{i=1}^{\log_2 n} \min(|L_i|, \delta_i^\star N_i) &= 1+\sum_{i=1}^{\log_2 n} \min(N_i, (s_i/\widehat x) N_i) \\
&= 1+\sum_{i=1}^{\log_2 n} \min(n / 2^i, n /\widehat x) \tag{since $N_i = n / s_i = n / 2^i$},
\end{align*}
where the extra $1$ accounts for the terminal residual layer.
Recalling that $\widehat x = C_0 x \log(2x)$ for a sufficiently large constant $C_0$, this sum is at most
\begin{align*}
    1+\sum_{i \ge 4 \log_2 x} n / 2^i +  4 (\log_2 x) \cdot n / \widehat x & \le n / x.
\end{align*}
Therefore, the algorithm reaches global load factor $1-1/x$ before all phases complete. This completes the proof of correctness.

\paragraph{Intuition for how to analyze cost. } Next we turn to bounding the expected cost of an insertion. Before presenting the formal analysis, we first sketch the intuition behind why the expected cost of an insertion is $O(x \log x)$.

Consider an insertion late in phase $i$, when the fraction $\delta_i$ of empty slots in $L_i$ is small. If an insertion is routed to $L_i$, it will incur $\Theta(\delta_i^{-2})$ expected probes to $L_i$; since slots in $L_i$ are $s_i = 2^i$ apart, this results in a total expected cost of $\Theta(s_i/\delta_i^2)$.

This is where the routing probability $p_i$ comes in. When $\delta_i$ is small, we have $p_i = \lambda (\widehat x/s_i)^{1/2}\delta_i^{3/2}$. Since $\delta_i\ge\delta_i^\star=s_i/\widehat x$ during phase $i$, the expected cost from the case where the insertion is routed to $L_i$ is therefore
$$\Theta(s_i/\delta_i^2) \cdot p_i = \Theta\left(\sqrt{\frac{\widehat x s_i}{\delta_i}}\right) = O(\widehat x) = O(x \log x).$$

The subtle part of the analysis (and indeed the algorithm design) is to ensure that the insertion is also cheap if it gets routed to $L_{i + 1}$. We will see that, if the constant $\lambda$ is large enough, then the array $L_{i + 1}$ will remain relatively empty (less than half full) with high probability for the entire duration of phase $i$. Thus insertions into $L_{i + 1}$ will be cheap, costing only $O(s_{i + 1}) = O(2^i)$ in expectation. With a bit of algebra, one can check that $2^i \le O(x \log x)$, meaning that the expected cost per insertion never exceeds $O(x \log x)$.

With this intuition in mind, we now proceed to the formal analysis.


\paragraph{Bounding the load of $L_{i + 1}$ during phase $i$.} We now show that, during phase $i$, the total number of keys routed to $L_{i+1}$ is less than $|L_{i+1}|/2$ with high probability.

Fix a phase $i$, and write $N=N_i$ and $s=s_i$. Let $O_i$ denote the number of keys routed to $L_{i+1}$ during phase $i$, and recall that the phase ends when $L_i$ reaches empty fraction
\[
\delta_i^\star=\frac{s}{\widehat x}.
\]

\begin{claim}[Overflow-count bound]
Assuming $\lambda$ is a sufficiently large constant, for every phase $i$,
\[
\Pr\left[O_i\ge \frac{N_i}{4}\right]
\le
\exp\left(-\Omega\left(\frac{n}{\widehat x}\right)\right).
\]
Equivalently, with this probability bound, phase $i$ ends before $L_{i+1}$ receives $|L_{i+1}|/2$ insertions.
\label{claim:overflow-count}
\end{claim}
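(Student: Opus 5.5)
We view phase $i$ as a sequence of independent routing coin flips whose bias depends only on the current empty fraction $\delta_i$ of $L_i$. During phase $i$ the occupancy of $L_i$ changes only when a coin comes up ``$L_i$'', and each such success lowers $\delta_i$ by exactly $1/N$. We therefore split phase $i$ into epochs indexed by the value of $\occ(L_i)$. While $L_i$ has exactly $k$ empty slots, so that $\delta = k/N$, every insertion is routed to $L_{i+1}$ with the same probability $1-p_i(k/N)$. The number of overflow insertions in that epoch is thus a geometric random variable $G_k$ with success probability $p_k := p_i(k/N)$, and distinct epochs are independent. The phase passes through the epochs $k = k_0, k_0-1, \dots, k^\star$, where $k_0 \le N$ is the number of empty slots at the start of the phase and $k^\star = \lceil \delta_i^\star N\rceil = \lceil N s/\widehat x\rceil$. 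Since $N s = n$, we have $k^\star \approx n/\widehat x$.

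Consequently $O_i$ is dominated by $\sum_{k=k^\star}^{N} G_k$, where $G_k$ counts failures before the first success. Its mean is
\[
\E[O_i] \le \sum_{k\ge k^\star} \frac{1-p_k}{p_k}.
\]
Only the $k$ with $p_k<1$ contribute, and for those $p_k = \lambda (\widehat x/s)^{1/2}(k/N)^{3/2}$. The sum is dominated by its smallest term index, so
\[
\sum_{k\ge k^\star} \frac{1}{\lambda}\Big(\frac{s}{\widehat x}\Big)^{1/2}\Big(\frac{N}{k}\Big)^{3/2} \;\le\; \frac{O(1)}{\lambda}\Big(\frac{s}{\widehat x}\Big)^{1/2} N^{3/2}(k^\star)^{-1/2}.
\]
Substituting $k^\star \ge N s/\widehat x$ gives $O(N/\lambda)$. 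Choosing $\lambda$ large therefore makes $\E[O_i] \le N/8$.

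For concentration we use a standard Chernoff-type tail bound for sums of independent geometric variables (equivalently, the moment generating function). For $t$ small relative to $\min_k p_k$, we have $\E[e^{tG_k}] \le \exp\big(O(t/p_k)\big)$. The smallest success probability is $p_{k^\star} \approx \lambda (\widehat x/s)^{1/2}(s/\widehat x)^{3/2} = \lambda s/\widehat x$. We take $t = c\,p_{k^\star}$ for a small constant $c$ and apply Markov's inequality:
\[
\Pr[O_i \ge N/4] \le \exp\big(-t(N/4 - O(\E[O_i]))\big) = \exp\big(-\Omega(t N)\big).
\]
Here $tN = \Omega(\lambda s N/\widehat x) = \Omega(n/\widehat x)$, which is exactly the claimed bound.

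The main obstacle is justifying the epoch decomposition rigorously. The phase may also end early, or span insertions beyond the target load, and the routing probabilities are adaptive. We handle this by coupling each epoch with an infinite i.i.d.\ coin sequence, so that $O_i$ is pointwise at most the sum over all epochs from $N$ down to $k^\star$. The bound then holds regardless of the starting occupancy and regardless of when insertions stop. This coupling also shows that $L_{i+1}$ receives at most $N/4 = |L_{i+1}|/2$ keys in phase $i$. We note that $L_{i+1}$ receives no keys before phase $i$ except in phase $i-1$, which is also the overflow phase for it. That part concerns the ``equivalently'' remark and would be checked separately by adding the phase-$(i-1)$ contribution.
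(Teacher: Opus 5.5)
Your proposal is correct and follows the paper's proof essentially step for step. You split $O_i$ into independent geometric variables indexed by the number of empty slots in $L_i$, bound $\sum_k 1/p_k$ by $O(N/\lambda)$ via $\sum_{k\ge m}k^{-3/2}=O(m^{-1/2})$, and apply Markov to the geometric MGF with $\theta=\Theta(p_{\min})$, where $p_{\min}\ge\lambda s/\widehat x$; your explicit i.i.d.\ coin coupling usefully makes rigorous the independence the paper simply asserts.

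One slip in your last paragraph: $L_{i+1}$ receives no keys at all before phase $i$, since the overflow layer in phase $i-1$ is $L_i$, not $L_{i+1}$. So the ``equivalently'' statement follows directly from $O_i<N_i/4=|L_{i+1}|/2$, with no phase-$(i-1)$ contribution to add.
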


The basic idea of the proof is that during phase $i$, we note that $p_i$, the probability of inserting into $L_i$, is only updated when there are new inserts into $L_i$. Thus, between each pair of consecutive insertions into layer $L_i$, the number of insertions into layer $L_{i+1}$ is a geometric random variable. Summing these geometric variables over the phase gives the total number of overflows, which can be bounded via a Chernoff bound.

\begin{proof}
Let $r$ be the number of empty slots currently remaining in $L_i$. While $r>\delta_i^\star N$, the probability that the next key is routed to $L_i$ is
\[
p_r
=
\min\left\{
1,
\lambda\left(\frac{\widehat x}{s}\right)^{1/2}
\left(\frac{r}{N}\right)^{3/2}
\right\}.
\]
Let $G_r$ be the number of overflows to $L_{i+1}$ before the next successful insertion into $L_i$, when $r$ empty slots remain. Then $G_r$ is a geometric random variable counting failures before the first success, with success probability $p_r$. Thus
\[
\mathbb E[G_r]
=
\frac{1-p_r}{p_r}
\le
\frac1{p_r}.
\]
The variables $G_r$ are independent because the routing coins are independent and $p_r$ depends only on $r$.

Only terms with $p_r<1$ can have $G_r > 0$. For these terms,
\[
p_r
=
\lambda\left(\frac{\widehat x}{s}\right)^{1/2}
\left(\frac{r}{N}\right)^{3/2}.
\]
Therefore
\begin{align*}
\sum_{r\ge \delta_i^\star N}\frac1{p_r}
&\le
\sum_{r\ge \delta_i^\star N}
\frac1\lambda\left(\frac{s}{\widehat x}\right)^{1/2}
\left(\frac{N}{r}\right)^{3/2}\\
&=
\frac{N^{3/2}}{\lambda}
\left(\frac{s}{\widehat x}\right)^{1/2}
\sum_{r\ge \delta_i^\star N}r^{-3/2}\\
&\le
\frac{3N}{\lambda}
\left(\frac{s}{\widehat x\delta_i^\star}\right)^{1/2}.
\end{align*}
The last inequality uses
$\sum_{r\ge m}r^{-3/2}\le 3m^{-1/2}$ with
$m=\lceil\delta_i^\star N\rceil$.\footnote{Indeed,
$\sum_{r\ge m}r^{-3/2}\le m^{-3/2}+\int_m^\infty t^{-3/2}\,dt
=m^{-3/2}+2m^{-1/2}\le3m^{-1/2}$.}
Since
\[
\delta_i^\star=\frac{s}{\widehat x},
\]
this becomes
\[
\sum_{r\ge \delta_i^\star N}\frac1{p_r}
\le
\frac{3N}{\lambda}.
\]
Choosing $\lambda$ large enough gives
\[
\mathbb E[O_i]\le \frac{N}{32}.
\]

We now prove concentration. The smallest relevant success probability occurs at the end of the phase, and is
\[
p_{\min}
=
\lambda\left(\frac{\widehat x}{s}\right)^{1/2}
(\delta_i^\star)^{3/2}
=
\frac{\lambda s}{\widehat x}.
\]
If this quantity is $1$ or larger, then the phase has no overflows, and the claim is trivial ($L_{i + 1}$ is empty at the end of the phase). Thus we can assume for the rest of the proof that $p_{\min} < 1$.
For a geometric random variable $G$ counting failures before success with success probability $p$, one has, for $0\le \theta\le p/4$,
\[
\mathbb E[e^{\theta G}]
\le
\exp\left(\frac{2\theta}{p}\right).
\]
Choose $\theta=p_{\min}/8$. Then $\theta\le p_r/4$ for every relevant $r$, and hence
\[
\mathbb E[e^{\theta O_i}]
\le
\exp\left(2\theta\sum_r\frac1{p_r}\right)
\le
\exp\left(\frac{6\theta N}{\lambda}\right).
\]
By Markov's inequality,
\[
\Pr\left[O_i\ge \frac{N}{4}\right]
\le
\exp\left(
-\frac{\theta N}{4}
+
\frac{6\theta N}{\lambda}
\right).
\]
Taking $\lambda$ sufficiently large, the exponent is at most $-\Omega(\theta N)$.

It remains to lower-bound $\theta N$. Recalling that $\theta = p_{\min}/8$, and
\[
p_{\min}=\frac{\lambda s}{\widehat x},
\]
we have
\[
\theta N
=
\Omega\left(\frac{sN}{\widehat x}\right)
=
\Omega\left(\frac{n}{\widehat x}\right),
\]
because $sN=\Theta(n)$. Hence
\[
\Pr\left[O_i\ge \frac{N}{4}\right]
\le
\exp\left(-\Omega\left(\frac{n}{\widehat x}\right)\right),
\]
as claimed.
\end{proof}

Note that the failure probability $\exp(- \Omega(n / \widehat x))$ from the preceding claim is at most $1/n^2$, since $\widehat x = O(x \log x)$ and $x = o(n / \log^2 n)$ by assumption. We will refer to this failure event as a \emph{critical failure}. With probability at least $1 - O(1/n)$, no critical failures ever occur. In the $O(1/n)$ failure case, one can afford to analyze every remaining insertion using the trivial $O(n)$ bound on cost. Therefore, when analyzing expected cost, we can assume that no critical failures occur.

\paragraph{Bounding the expected cost of an insertion. }
Having shown that critical failures are rare, we now bound the expected cost of an insertion. Let $i$ be the current phase. By construction, we only perform phases $i$ for which $\delta_i^\star \le 1$.
Since
$$\delta_i^\star=\frac{s_i}{\widehat x},$$
we have that
\begin{equation}
\label{eq:phase-bound}
s_i \le \widehat x.
\end{equation}

Now, consider the expected cost of the insertion. With probability $p_i$, the insertion is routed to $L_i$. An insertion into $L_i$ will examine $\Theta(\delta_i^{-2} + 1)$ expected slots in $L_i$, making for a total expected cost of $\Theta(s_i/\delta_i^2)$. The expected cost from the case where the insertion is routed to $L_i$ is therefore
\begin{align*}
\Theta(s_i/\delta_i^2) \cdot p_i
\le O\left(\sqrt{\frac{\widehat x s_i}{\delta_i}}\right)
\le O(\widehat x)
= O(x \log x),
\end{align*}
where the second inequality uses $\delta_i\ge\delta_i^\star=s_i/\widehat x$.

On the other hand, barring any critical failures, the overflow layer $L_{i + 1}$ is at most half full. The expected cost from inserting into $L_{i + 1}$ is therefore at most $O(s_{i + 1}) = O(\widehat x) = O(x \log x)$ (where the first equality follows from \eqref{eq:phase-bound}).

Thus, the overall expected cost of the insertion is at most $O(x \log x)$, which completes the proof of Theorem \ref{thm:main}.

\section{Handling Negative Queries}

Next we modify interlinear probing to support efficient \emph{negative} queries, i.e., queries that verify a given key $u$ is \emph{not} in the table. To do this, we make two modifications to our insertion algorithm:

\paragraph{Modification 1: Breaking each layer $L_i$ into two sub-layers $L_i^1$ and $L_i^2$. } For each layer $L_i$, partition it into two sub-layers $L_i^1$ and $L_i^2$, where the slots in $L_i$ alternate between $L_i^1$ and $L_i^2$. Let $g$ be a hash function mapping elements $u$ to $\{1, 2\}$.

During phase $i - 1$, and during phase $i$ up until the probability of placing an element into $L_i$ falls below $1/2$, we perform insertions into $L_i$ as follows: To place an element $u$ into $L_i$, we place it into the first free slot that we find in $L_i^{g(u)}$, ignoring any free slots in $L_i^{2-g(u)}$. We call such an insertion an \defn{early insertion into $L_i$}.

Then, once we are in phase $i$ and the probability of placing an element into $L_i$ is below $1/2$, we perform insertions into $L_i$ in a way that swaps the roles of $L_i^1$ and $L_i^2$: To place an element $u$ into $L_i$, we place it into the first free slot that we find in $L_i^{2-g(u)}$, ignoring any free slots in $L_i^{g(u)}$. We call such an insertion a \defn{late insertion into $L_i$}.

\begin{remark}\label{rem:balanced-sublayers}
This modification introduces the following minor technical complication in the analysis: In principle, due to dumb luck, the number of elements in $L_i$ that get placed into $L_i^1$ and $L_i^2$ could be very unbalanced. By a Chernoff bound, the number of elements placed in each of $L_i^1$ and $L_i^2$ will differ by at most $O(\sqrt{n \log n})$ with high probability in $n$. So long as $\sqrt{n \log n} = o(n / \widehat x)$ (recall that we ultimately fill $L_i$ to have $\Theta(n / \widehat x)$ free slots remaining), this distinction will not matter anywhere in the analysis. To ensure that this condition holds, we will add the condition that $x \le n^{0.49}$ to our analysis in this section.
\end{remark}

\paragraph{Modification 2: Deciding which elements to insert in $L_i$ during phase $i$. } Finally, during phase $i$, when deciding whether to place an element $u$ into $L_i$ or $L_{i + 1}$, we will now use the following rule. Let $f$ be a hash function mapping each element to a random real number in $[0, 1]$. We take $f$, $g$, and $h$ to be mutually independent. Let $p$ be the probability (given by \eqref{eq:p_i}) of placing the element $u$ into $L_i$. If $p \ge 1/2$, then we make the decision with a random probability-$p$ coin flip. But, if $p < 1/2$, then we place the element $u$ in $L_i$ if and only if $f(u) \in [p, 2p]$.

With this second modification in place, we have completed the description of the modified insertion algorithm.

\paragraph{Implementing queries. } To query an element $u$, we must scan the positions $h(u), h(u) + 1, h(u) + 2, \ldots$ until either we find the element (a positive query) or until we can conclude it is not present (a negative query). We will now describe how a negative query can conclude that an element $u$ is not present (in a manner much more efficient than just looking for a free slot in each $L_i^j$). We will assume in this and the next section that the condition in Remark~\ref{rem:balanced-sublayers} holds and that insertions never enter the $1/n^2$ failure case in which they perform the trivial greedy insertion algorithm for the purpose of implementing negative queries.\footnote{We can keep one metadata bit to track this failure event and scan all $n$ slots for a query if the metadata bit is set. This would only add an $O(1)$ expected cost to queries.}

Let $I$ be the set of layer-numbers $i$ such that at least one element was inserted into $L_i$ by the algorithm (so we got to phase $\ge i - 1$). To conclude that $u$ is not present, the query must certify that $u$ is not in any $L_i^j$ for any $i \in I$ and $j \in \{1, 2\}$.

If $u$ were to be in $L_i^{g(u)}$, then it would have to have been placed there by an early insertion. All elements $v$ occupying positions of $L_i^{g(u)}$ between $h(u)$ (inclusive) and the position containing $u$ would therefore have to satisfy $g(v) = g(u)$. Thus, a negative query can conclude that $u$ is not present in $L_i^{g(u)}$ as soon as it encounters (in $L_i^{g(u)}$) either a free slot or an element $v$ such that $g(v) \neq g(u)$.

On the other hand, if $u$ were to be in $L_i^{2-g(u)}$, then it would have to have been placed there by a late insertion. Moreover, $u$ would have needed to be placed into $L_i$ when the probability $p$ of placing an element into $L_i$ satisfied $f(u) \in [p, 2p]$. One consequence of this is that any element $v$ in $L_i^{2-g(u)}$ that satisfies $g(v) = g(u)$ and that is inserted into $L_i$ \emph{before} $u$ would need to satisfy $f(v) \ge p$, and therefore $f(v) \ge f(u) / 2$. Therefore, as soon as the query encounters (in $L_i^{2-g(u)}$) either a free slot or an element $v$ such that $g(v) = g(u)$ and $f(v) < f(u) / 2$, the query can conclude that $u$ is not present in $L_i^{2-g(u)}$.

Once the query has certified that $u$ is not present in either $L_i^{g(u)}$ or $L_i^{2-g(u)}$ for any $i \in I$, it can complete as a negative query.

\paragraph{Bounding expected operation cost.}
The bound on expected insertion cost (and therefore on expected positive-query cost) follows from exactly the same analysis as in Section~4.

Therefore, we focus our analysis on bounding the expected cost of a negative query. The main step here is to bound how far a query must probe in order to conclude that $u$ is not present in $L_i^j$ for a given $i \in I$ and $j \in \{1, 2\}$.

\begin{lemma}
  Let $d_i$ be the distance from $h(u)$ to the position at which the query certifies that $u$ is not in $L_i$. Then
  \[\mathbb E[d_i] = O(\widehat x).\]
\end{lemma}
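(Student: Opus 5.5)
We bound separately the distance needed to certify absence from $L_i^{g(u)}$ and from $L_i^{2-g(u)}$, and take $d_i$ to be the larger of the two; it suffices to show each has expectation $O(\widehat x)$. Slots of a sub-layer are spaced $2s_i$ apart, so if the query inspects $K$ slots of a sub-layer before stopping, the distance is $O(s_i K + s_i)$. Because $s_i \le \widehat x$ by \eqref{eq:phase-bound}, it is enough to show $\E[K] = O(\widehat x / s_i)$. In fact we will usually get $\E[K]=O(1)$, which is much stronger. Throughout we condition on the absence of critical failures (Claim~\ref{claim:overflow-count}) and on the balance condition of Remark~\ref{rem:balanced-sublayers}. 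The failure cases contribute $O(1)$ in expectation, via the metadata bit.

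\emph{Sub-layer $L_i^{g(u)}$.} Each slot scanned is either free, in which case we stop, or holds some element $v \ne u$. The hash $g$ is independent of $h$ and $f$ and of all routing decisions, apart from the fact that $g(v)$ determines which sub-layer $v$ lives in. I would argue as follows. Fix the final configuration, including each element's sub-layer label (early or late). A slot in $L_i^{g(u)}$ filled by an early insertion holds $v$ with $g(v)=g(u)$. A slot filled by a late insertion holds $v$ with $g(v)\neq g(u)$, and hitting such a slot stops the query. The query therefore stops at the first slot, starting from $h(u)$, that is free or late-filled. The key step is to lower-bound the density of such slots. At the moment late insertions begin in $L_i$, the routing probability $p_i$ is $1/2$. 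That is, $\lambda(\widehat x/s_i)^{1/2}\delta_i^{3/2} = 1/2$, so the empty fraction at that moment is $\delta^{(0)} = \Theta((s_i/\widehat x)^{1/3})$. All of these slots end up free or late-filled. So the "stopping" slots have density at least $\delta^{(0)}$ in each sub-layer.

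The remaining issue is their spatial distribution, since early insertions can create runs. The position $h(u)$ is independent of the configuration, because $u$ is not in the table. Hence the expected length of the run of early-filled slots starting at a uniformly random position equals the second moment of run lengths, divided by the number of slots. At the moment early insertions stop, the sub-layer looks like a linear-probing table at load $1-\delta^{(0)}$. Knuth's analysis (the same $\Theta(\delta^{-2})$ estimate used in Section~4) then gives expected run length $O((\delta^{(0)})^{-2}) = O((\widehat x/s_i)^{2/3})$. Multiplying by $s_i$ gives $O(s_i^{1/3}\widehat x^{2/3}) = O(\widehat x)$.

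\emph{Sub-layer $L_i^{2-g(u)}$.} Here the query stops at a free slot, or at $v$ with $g(v)=g(u)$ and $f(v)<f(u)/2$. Elements of $L_i^{2-g(u)}$ with $g(v)=g(u)$ are exactly those placed by late insertions. We condition on $f(u)$. If $f(u)\ge 2p_{\min}$, I expect that a constant fraction of late-inserted elements satisfy $f(v) < f(u)/2$. This is because late insertions occur at routing probabilities $p$ ranging down to $p_{\min}$, and an element inserted at probability $p$ has $f(v)\in[p,2p]$. The elements with $f(v)<f(u)/2$ are therefore the ones inserted after $p$ fell below $f(u)/4$. That region of the scan then behaves like a table whose empty fraction is at the value $\delta$ where $p_i(\delta)\approx f(u)$. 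The run length is $O(\delta^{-2})$, and integrating over the density of $f(u)$ should give a cost comparable to the insertion calculation in Section~4, namely $\int p\cdot s_i\delta(p)^{-2}\,dp/p$ in spirit. If $f(u)<2p_{\min}$, no $v$ qualifies and we scan to a free slot. By Claim~\ref{claim:overflow-count}'s regime, at least $\delta_i^\star$ free slots remain, giving cost $O(s_i/(\delta_i^\star)^2)$. This happens with probability $O(p_{\min}) = O(\lambda s_i/\widehat x)$, so the contribution is $O(s_i\cdot \widehat x^2/s_i^2\cdot s_i/\widehat x)=O(\widehat x)$.

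The main obstacle is making the "effective density of stopping slots" argument rigorous in this second case. The set of qualifying $v$ is determined by insertion times, and those are correlated with position through the linear-probing dynamics. I would handle this by coupling: view the scan as running in the table restricted to elements inserted before $p$ dropped to $f(u)/4$, treating the later elements as absent. By monotonicity of linear probing, the first slot that is free in this earlier snapshot is at or after the first stopping slot in the final table. This lets me apply the standard $O(\delta^{-2})$ run bound at that snapshot.
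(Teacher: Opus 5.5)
Your structure matches the paper's. You split into the two sub-layers, bound each by an unsuccessful search from the independent position $h(u)$ in a snapshot of that sub-layer (taken when $p_i$ drops below $1/2$, resp.\ below $f(u)/4$), and integrate over $f(u)$. Your treatment of $L_i^{g(u)}$ is essentially the paper's, and your separate handling of $f(u)<2p_{\min}$ is fine.

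The gap is in $L_i^{2-g(u)}$. The step that actually delivers $O(\widehat x)$ is left as ``should give a cost comparable to the insertion calculation in Section~4 \ldots in spirit.'' Taken literally, that analogy proves only $O(\widehat x\log x)$. Section~4 gives the pointwise bound $p\cdot s_i/\delta(p)^2=O(\widehat x)$. Integrating that against $dp/p$ from $p_{\min}=\lambda s_i/\widehat x$ to $1$ yields $O(\widehat x\log(\widehat x/s_i))$, and the lemma would fail for small $i$.

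What is missing is the explicit computation that uses the exponent $3/2$:
\begin{itemize}
\item Solving $\lambda(\widehat x/s_i)^{1/2}\delta^{3/2}=z/4$ gives $\delta(z)=\Theta\bigl(z^{2/3}(s_i/\widehat x)^{1/3}\bigr)$.
\item Hence the conditional cost given $f(u)=z$ is $O(s_i/\delta(z)^2)=O(\widehat x^{2/3}s_i^{1/3}z^{-4/3})$.
\item Since $4/3>1$, $\int_{\Theta(s_i/\widehat x)}^{1} z^{-4/3}\,dz=O\bigl((\widehat x/s_i)^{1/3}\bigr)$ is dominated by its lower endpoint, giving $O(\widehat x)$.
\end{itemize}
Equivalently, $z\cdot s_i/\delta(z)^2$ decays like $z^{-1/3}$. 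The dyadic ranges of $z$ therefore form a geometric series rather than $\Theta(\log x)$ comparable terms. This is exactly why the routing probability uses $\delta^{3/2}$ rather than $\delta^2$: with $\delta^2$ the integrand is $\Theta(\widehat x/z)$ and your sketch really would lose a logarithm.

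There are also some smaller issues.
\begin{itemize}
\item Your aside that you will ``usually get $\E[K]=O(1)$'' is false. Your own $L_i^{g(u)}$ bound is $\E[K]=\Theta\bigl((\widehat x/s_i)^{2/3}\bigr)$.
\item The heuristic that a constant fraction of late-inserted elements satisfy $f(v)<f(u)/2$ is unnecessary and not what makes the argument work. Also, the elements with $f(v)<f(u)/2$ include, but are not exactly, those inserted after $p_i<f(u)/4$.
\item The snapshot coupling you end with is the paper's argument, but it is not justified by ``monotonicity of linear probing.'' The actual reason is this: once $p_i<f(u)/4$, every element later placed in $L_i^{2-g(u)}$ is a late insertion with $g(v)=g(u)$ and $f(v)\le 2p_i<f(u)/2$. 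Hence every slot free at the snapshot is a stopping slot in the final table.
\item You should say what the snapshot is when $p_i$ never falls below $f(u)/4$, e.g.\ for $f(u)\in[2p_{\min},4p_{\min})$ or when insertions end early. Taking the end of phase $i$ (or of the execution) works, since $\delta_i\ge\delta_i^\star$ still gives $\delta=\Omega\bigl(z^{2/3}(s_i/\widehat x)^{1/3}\bigr)$.
\end{itemize}
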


\begin{proof}
  Let $d_i^{g(u)}$ and $d_i^{2-g(u)}$ be the distances at which the query certifies that $u$ is not in $L_i^{g(u)}$ and $L_i^{2-g(u)}$, respectively. We have that
  \[\mathbb E\left[d_i\right] = \mathbb E\left[\max\{d_i^{g(u)},d_i^{2-g(u)}\}\right] \leq \mathbb E\left[d_i^{g(u)}\right] + \mathbb E\left[d_i^{2-g(u)}\right].\]

  First, we bound $\mathbb E[d_i^{g(u)}]$. Let $t$ be the point in time immediately after the final early insertion into $L_i^{g(u)}$. Any position in $L_i^{g(u)}$ that is free at time $t$ is guaranteed to either still be free later on or to contain an element $v$ such that $g(v) \neq g(u)$. Therefore, $\mathbb E[d_i^{g(u)}]$ is at most the expected distance an insertion into $L_i^{g(u)}$ would travel at time $t$.

  How full is $L_i^{g(u)}$ at time $t$? With high probability in $n$, $L_i^{g(u)}$ is at most half full during phase $i - 1$ (for $i > 1$). Then, during phase $i$, time $t$ occurs when the probability of placing an element into $L_i$ falls below $1/2$. This happens when
  \[
  \lambda\left(\frac{\widehat x}{s_i}\right)^{1/2}
  \delta_i^{3/2}\approx \frac12,
  \qquad\text{or equivalently}\qquad
  \delta_i=\Theta\left(\left(\frac{s_i}{\widehat x}\right)^{1/3}\right).
  \]
  If the execution ends before this crossing occurs, then its final routing
  probability is still at least $1/2$, which gives the same lower bound on
  $\delta_i$.
  Thus at time $t$ (and with high probability in $n$), $L_i^{g(u)}$ has load at most
  \[
  1-\Omega\left(\left(\frac{s_i}{\widehat x}\right)^{1/3}\right),
  \]
  so an insertion must inspect $O((\widehat x/s_i)^{2/3})$ positions in expectation to find a free slot. Multiplying by $s_i = 2^i$ (to account for the slots in $L_i^{g(u)}$ being $\Theta(s_i)$ apart) gives
  \[
  \mathbb E[d_i^{g(u)}]
  =
  O\left(\left(\frac{\widehat x}{s_i}\right)^{2/3}s_i\right)
  =
  O(\widehat x),
  \]
  where the final inequality uses $s_i=O(\widehat x)$ for every layer that receives an insertion.

  Next, we bound $\mathbb E[d_i^{2-g(u)}]$. Redefine $t$ to be the time at which the probability of inserting into $L_i$ goes below $f(u)/4$ (if this time does not exist, let it be the first moment after phase $i$ ends). Any position in $L_i^{2-g(u)}$ that is free at time $t$ is guaranteed to either be free later or contain an element $v$ such that $g(v)=g(u)$ and $f(v)<f(u)/2$. Thus, $\mathbb E[d_i^{2-g(u)}]$ is at most the expected cost of an insertion into $L_i^{2-g(u)}$ at time $t$.

  Time $t$ occurs during phase $i$ when (or before) the probability of placing an element into $L_i$ falls below $f(u) / 4$. This happens when
    \[
    \lambda\left(\frac{\widehat x}{s_i}\right)^{1/2}
    \delta_i^{3/2}\approx \frac{f(u)}4,
    \qquad\text{or equivalently}\qquad
    \delta_i=\Theta\left(
    f(u)^{2/3}\left(\frac{s_i}{\widehat x}\right)^{1/3}
    \right).
    \]
    Even if this point never happens, the lower bound on $\delta_i$ still holds. Thus, writing $z=f(u)$,
    \[
    \mathbb E\left[d_i^{2-g(u)}\mid f(u)=z\right]
    =
    O\left(\frac{s_i}{\delta_i^2}\right)
    =
    O\left(\frac{\widehat x^{2/3}s_i^{1/3}}{z^{4/3}}\right).
    \]
    Moreover, throughout phase $i$ we have $\delta_i\ge\delta_i^\star=s_i/\widehat x$, and hence
    \[
    p_i(\delta_i)
    \ge
    \min\left\{1,\frac{\lambda s_i}{\widehat x}\right\}.
    \]
    If $u$ could have been placed into $L_i$ by a late insertion, then
    $p_i(\delta_i)<1/2$ and $z\in[p_i(\delta_i),2p_i(\delta_i)]$, so
    $z\ge\lambda s_i/\widehat x$. For smaller $z$, the query does not need to
    inspect $L_i^{2-g(u)}$. Therefore,
    \[
    \mathbb E\left[d_i^{2-g(u)}\right]
    \le
    \int_{\min\{1,\lambda s_i/\widehat x\}}^1
    O\left(\frac{\widehat x^{2/3}s_i^{1/3}}{z^{4/3}}\right)\,dz
    =
    O(\widehat x).
    \]
    We conclude that $\mathbb E[d_i] = O(\widehat x)$, as desired.
\end{proof}

\begin{lemma}
  The expected cost of a negative query is
  \[O(\widehat x \log x) = O(x \log^2 x).\]
\end{lemma}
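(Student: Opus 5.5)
The plan is to charge the query's cost to the layers in $I$ and apply the preceding lemma to each one. A negative query scans $h(u), h(u)+1, \ldots$ and stops once it has certified absence from every $L_i^j$ with $i\in I$. Its total cost is therefore at most $\max_{i\in I} d_i \le \sum_{i\in I} d_i$, plus $O(1)$ for the failure-event metadata check described in the footnote. By linearity of expectation,
\[
\mathbb E[\text{cost}] \le O(1) + \mathbb E\Big[\sum_{i \in I} d_i\Big].
\]

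The first step is to bound $|I|$ deterministically. A layer $L_i$ receives an insertion only if the phase reaches $i-1$ or $i$. Every phase $i$ that is actually performed satisfies $\delta_i^\star\le 1$, i.e.\ $s_i\le \widehat x$, by \eqref{eq:phase-bound}. Its overflow layer then has $s_{i+1}\le 2\widehat x$. Hence every $i\in I$ has $2^i\le 2\widehat x$, so $i\le \log_2(2\widehat x) = O(\log x)$, using $\widehat x = C_0x\log(2x)$. The index set $I$ is random, so I would write the sum as $\sum_{i=1}^{\lceil\log_2(2\widehat x)\rceil} \mathbb E[d_i\mathbf 1_{i\in I}]$ over a fixed range. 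The previous lemma gives $\mathbb E[d_i]=O(\widehat x)$ for each such $i$, with the convention $d_i=0$ when $i\notin I$.

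The second step is the summation: $O(\log x)$ terms of size $O(\widehat x)$ give $O(\widehat x\log x)=O(x\log^2 x)$. Note that the per-layer bound from the previous lemma is uniform in $i$, namely $O(\widehat x)$ for both sub-layers, so a crude sum suffices and no geometric decay across layers is needed.

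The main obstacle is justifying that the preceding lemma's bound holds even though $I$ and the ``time $t$'' events depend on the random execution. The lemma's analysis was carried out under the high-probability events: no critical failure, balanced sub-layers per Remark~\ref{rem:balanced-sublayers}, and the overflow layer staying at most half full. On the complementary event, which has probability $O(1/n)$, I would bound the query cost trivially by $O(n)$, contributing only $O(1)$ in expectation. On the good event, the per-layer expectations apply directly.
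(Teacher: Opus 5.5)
Your proposal is correct and matches the paper's proof: bound the cost by $\max_{i\in I} d_i \le \sum_{i\in I} d_i$, apply the preceding per-layer $O(\widehat x)$ bound, and use $|I| = O(\log x)$. The extra steps you add, namely deriving $|I| = O(\log x)$ from $s_i \le \widehat x$, summing over a fixed index range to handle the randomness of $I$, and charging the $O(1/n)$ failure event an $O(n)$ cost, only make explicit what the paper's one-line argument leaves implicit.
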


\begin{proof}
  A negative query finishes once it concludes for each $i$ that $u$ is not in $L_i$. Therefore, the expected cost of a negative query is
  \[\mathbb E\left[\max_{i \in I} d_i\right] \leq \sum_{i \in I} \mathbb E\left[d_i\right] = O(\widehat x |I|) = O(\widehat x \log x),\]
  as desired.
\end{proof}

        Putting the pieces together, we have the following theorem:

        \begin{theorem}
        There is an implementation of linear probing without reordering that has the following properties. If $x \le n^{0.49}$ is known in advance, then each of the first $(1 - 1/x)n$ insertions completes in worst-case expected time $O(x \log x)$, each positive query completes in worst-case expected time $O(x \log x)$, and each negative query completes in worst-case expected time $O(x \log^2 x)$.
            \label{thm:mainneg}
        \end{theorem}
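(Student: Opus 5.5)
Theorem~\ref{thm:mainneg} follows by combining three ingredients: the correctness argument for the target load, Theorem~\ref{thm:main} for insertions and positive queries, and the two lemmas just proved for negative queries. The only real work is to check that the two modifications leave the Section~4 analysis intact.

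\textbf{Correctness.} The load-factor verification depends only on the phase-ending thresholds $\delta_i^\star = s_i/\widehat x$. Neither modification changes these thresholds, so the global load still reaches $1-1/x$ before any phase with $\delta_i^\star>1$.

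\textbf{Insertions and positive queries.} I would re-run the argument of Theorem~\ref{thm:main}, checking the following points.
\begin{enumerate}[label=(\roman*)]
\item Under Modification~2, the routing decision is still Bernoulli with the same probability $p_i(\delta_i)$, conditioned on the past. The new key $u$ is fresh, so $f(u)$ is uniform and independent of the history, and $\Pr[f(u)\in[p,2p]]=p$ when $p<1/2$. The geometric-sum and MGF argument of Claim~\ref{claim:overflow-count} therefore goes through verbatim. Note that it only needs the conditional success probability at each step, not independence across keys.
\item Restricting an insertion into $L_i$ to a sub-layer $L_i^j$ is ordinary linear probing on a layer of spacing $2s_i$. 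By Remark~\ref{rem:balanced-sublayers} and $x\le n^{0.49}$, the two sub-layers have empty fractions within $o(n/\widehat x)$ of each other. Hence each has empty fraction $\Theta(\delta_i)$ throughout phase $i$, and the cost bound $O(s_i/\delta_i^2)$ still holds.
\item Since $g(u)$ is independent of the hash positions, the uniform-hash analysis of standard linear probing applies within each sub-layer.
\end{enumerate}
With these checks, the cost computation $p_i\cdot O(s_i/\delta_i^2)+O(s_{i+1}) = O(\widehat x)$ is unchanged. This gives worst-case expected insertion time $O(x\log x)$. A positive query for $u$ retraces exactly the probes the insertion of $u$ made, so its expected cost is bounded by the same quantity.

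\textbf{Negative queries.} Here I would invoke the preceding lemma, which gives expected cost $O(\widehat x\log x)=O(x\log^2 x)$. This uses $|I|=O(\log \widehat x)=O(\log x)$, by \eqref{eq:phase-bound}: only layers with $s_i\le 2\widehat x$ are ever reached.

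\textbf{Failure events.} I would add the contribution of the failure events. Critical failures, sub-layer imbalance, and the greedy fallback together occur with probability $O(1/n)$. In that case a metadata bit forces a full $O(n)$ scan, which adds only $O(1)$ to every expected bound.

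\textbf{Main obstacle.} The main obstacle is point (i): after Modification~2, routing is a deterministic function of $f(u)$ rather than a fresh coin. I would resolve this by conditioning on the history before the insertion of each new key and using that $f(u)$ is independent of that history. One must also be careful that a queried or inserted key $u$ is distinct from the previously inserted keys. For a worst-case single operation this holds, because its hashes are independent of the table state.
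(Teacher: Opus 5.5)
Your proposal is correct and takes the same route as the paper. The paper obtains this theorem by combining the Section~4 insertion analysis, which it asserts carries over unchanged, with the two negative-query lemmas. Your additional checks only make explicit what the paper leaves implicit:
\begin{itemize}
\item Modification~2 still routes a key to $L_i$ with probability exactly $p_i$, since $\Pr[f(u)\in[p,2p]]=p$ for $p<1/2$.
\item Remark~\ref{rem:balanced-sublayers} keeps each sub-layer's empty fraction at $\Theta(\delta_i)$.
\item $|I|=O(\log x)$, because only layers with $s_i\le 2\widehat x$ are ever reached.
\item The metadata bit absorbs the $O(1/n)$ failure events at $O(1)$ extra expected cost.
\end{itemize}
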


\section{Optimizing Amortized Cost}
\label{sec:nonuniform-scales}

We now present the full version of interlinear probing, which achieves amortized expected insertion
cost $O(x)$. First, let us see why we do not already achieve amortized expected cost $O(x)$. Observe that each layer $L_i$ is filled to be $1-\delta_i^\star = 1-s_i/\widehat{x}$ full. By a standard linear probing analysis, the amortized expected number of probes made per insertion into $L_i$ is $\Theta(1/\delta_i^\star) = \Theta(\widehat x/s_i)$. Multiplying by $s_i$ (to account for the slots in $L_i$ being $s_i$ apart) gives an amortized expected cost of $\Theta(\widehat x)$ per insertion into $L_i$. Therefore, the amortized expected insertion cost across all layers is also $\Theta(\widehat x)$. 
Notice that the amortized cost of an insertion into each layer is roughly the same, and every layer ends up with the same number of free slots, $(n/s_i)\delta_i^\star = n / \widehat x$.

However, the amortized contribution to the overall amortized cost of insertions is \emph{not} the same for each layer: the total contribution to the overall amortized expected insertion cost from layers $L_i$ for $i \geq \log\log x$ is at most
\[
\frac{1}{n(1-1/x)} \sum_{i \geq \log\log x} O(\widehat x \cdot |L_i|) = O(x).
\]

Therefore, it is necessary to reduce the amortized cost of insertions into the first $\log\log x$ layers. We opt to replace these problematic layers with a single ``megalayer'' taking up a $1-1/\log x$ fraction of all slots, assuming $x$ is sufficiently large.\footnote{Otherwise, standard linear probing suffices for any constant $x$.} At the end of all insertions, this megalayer has $n/(2x)$ empty slots, and thus the amortized expected insertion cost is $O(x)$. The key to the construction is making sure that the worst-case expected insertion and negative query times remain the same after this change.



\paragraph{A two-layer decomposition.} We now partition the table into two layers $L_1$ and $L_2$, with
\[
|L_1|=n\left(1-\frac{1}{\log x}\right)
\qquad\text{and}\qquad
|L_2|=\frac{n}{\log x},
\]
where layer $L_2$ consists of one slot every $\log x$ slots and $L_1$ consists of the remaining slots. Consecutive slots of $L_1$ are at most two positions apart, so we may treat its spacing as $s_1=1$ at the cost of only constant factors.\footnote{Technically, projecting a uniform hash $h(u)$ onto the first slot of $L_1$ in its probe sequence would make a slot immediately after a slot of $L_2$ twice as likely to be the first probe into $L_1$ as any other slot. To restore uniformity, whenever an insertion routed to $L_1$ has $h(u)$ in $L_2$, we independently and uniformly select one of the $\log x-1$ slots of $L_1$ between $h(u)$ and the next slot of $L_2$. The insertion scans from $h(u)$ to the selected slot, skipping the intervening slots of $L_1$, and then performs ordinary linear probing restricted to $L_1$. This adds only $O(\log x)$ probes. For a negative query with $h(u)$ in $L_2$, we ignore the stopping conditions for $L_1$ until the query has scanned the first $\log x$ positions. Thus, this modification does not affect any of the worst-case or amortized bounds.} We ignore rounding throughout this section. Layer $L_2$ is further partitioned according to the algorithm we already specified for insertions, targeting a load factor of
\[
1-\frac{\log x}{2x}.
\]
Thus, there will be (at most)
\[
|L_2|\frac{\log x}{2x}=\frac{n}{2x}
\]
empty slots in $L_2$ at the end of all insertions. To invoke Theorem~\ref{thm:main} inside $L_2$, we use load parameter
\[
x_2:=\frac{2x}{\log x}.
\]
Under the assumption $x\le n^{0.49}$ used below, $x_2$ satisfies the hypotheses of Theorems~\ref{thm:main} and~\ref{thm:mainneg} for a table of size $|L_2|$. Multiplying the bound from Theorem~\ref{thm:main} by $\log x$ to account for the density of $L_2$ in the hash table, the worst-case expected cost of insertions into $L_2$ is
\[
O\left(\log x\cdot x_2\log x_2\right)=O(x\log x).
\]
Since at most $|L_2|=n/\log x$ elements are inserted into $L_2$, their contribution to the overall amortized insertion cost is
\[
O\left(\frac{x\log x\,|L_2|}{n(1-1/x)}\right)=O(x).
\]
We therefore focus on insertions into $L_1$ for the rest of the amortized analysis.

\paragraph{Two new phases.} We divide insertions into the hash table into two phases, 1 and 2, and set constants $\lambda$ and $C_0$ sufficiently large. Our insertion algorithm follows the same protocol as Algorithm~\ref{alg:non-greedy-insert}, restricted to the two layers $L_1,L_2$ and the two phases, with the difference that insertions into $L_2$ do not follow ordinary linear probing but instead invoke our original interleaved algorithm restricted to $L_2$. This internal algorithm is used from the first insertion routed to $L_2$, including during phase 1.

Set $s_1:=1$ and
\[
\delta_1^\star:=\frac{n}{2x|L_1|}=\Theta(1/x).
\]
Thus, $L_1$ also has $n/(2x)$ empty slots when phase 1 ends. Because we now jump directly from a layer of density $\Theta(1)$ to a layer of density $\Theta(1/\log x)$, we set
\[
\widehat x_{1}:=C_0x\log^2(2x).
\]
During phase 1, a key is routed to $L_1$ with probability
\[
p_1(\delta_1)
=
\min\left\{1,
\lambda\widehat x_{1}^{1/2}\delta_1^{3/2}
\right\},
\]
and is otherwise routed to $L_2$. Phase 1 ends when $\delta_1\le\delta_1^\star$; during phase 2, all remaining keys are routed to $L_2$. In effect, we set the probability of insertion as if we are going to fill it up to a factor $1-O(1/(x\log^2 x))$ full, but then we stop early, when $L_1$ is only a $1-\Theta(1/x)$ fraction full. As we shall see, increasing the probability of late and expensive insertions into $L_1$ balances out with the fact that $L_1$ fills up less than it would have under the original algorithm, keeping the worst-case expected costs the same as in Theorem \ref{thm:mainneg} while achieving better amortized bounds.

To support negative queries, we also apply the two modifications from Section~5. We split $L_1$ into two alternating sub-layers, $L_1^1$ and $L_1^2$, and use the same fully random hash functions $f$ and $g$ for both $L_1$ and each internal layer of $L_2$. As shown below, except in the critical-failure event, the first internal layer of $L_2$, which has size $|L_2|/2$, remains at most half full throughout phase 1, so these insertions are early insertions; thereafter, early and late insertions into each internal layer of $L_2$ are distinguished exactly as in Section~5.

\paragraph{Correctness and worst-case expected cost.} The two layers retain a total of
\[
\delta_1^\star |L_1|+|L_2|\frac{\log x}{2x}
=
\frac{n}{2x}+\frac{n}{2x}
=
\frac nx
\]
empty slots, so the algorithm supports the first $(1-1/x)n$ insertions.

It remains to check that $L_2$ does not fill prematurely during phase 1. Let $O_1$ be the number of keys routed to $L_2$ during this phase. Repeating the geometric-random-variable calculation from Claim~\ref{claim:overflow-count}, we have
\begin{align*}
\E[O_1]
&=
O\left(
\frac{|L_1|}{\lambda}
\sqrt{\frac{s_1}{\widehat x_{1}\delta_1^\star}}
\right)\\
&=
O\left(\frac{n}{\log(2x)}\right)
=O(|L_2|).
\end{align*}
By taking $C_0$ and $\lambda$ sufficiently large, the implicit constant can be made smaller than $1/8$. The same moment-generating-function argument as in Section~4 then gives
\[
\Prb\left[O_1>\frac{|L_2|}{4}\right]
\le
\exp\left(-\Omega(n/x)\right).
\]
Indeed, the minimum nontrivial routing probability is $\Theta(\log(2x)/x)$, and multiplying it by $|L_2|=\Theta(n/\log(2x))$ gives the exponent $\Omega(n/x)$. As in Section~4, this critical-failure event contributes negligibly when $x\le n^{0.49}$, since we can use the trivial $O(n)$ insertion bound on that event.

For an insertion made during phase 1, the expected contribution from the branch routed to $L_1$ is
\begin{align*}
p_1(\delta_1)O\left(\frac{s_1}{\delta_1^2}\right)
&=
O\left(\sqrt{\frac{\widehat x_{1}s_1}{\delta_1}}\right)\\
&\le
O\left(\sqrt{\widehat x_{1}x}\right)
=O(x\log x),
\end{align*}
where the inequality uses $\delta_1\ge\delta_1^\star=\Theta(1/x)$. As observed above, the expected physical cost of an insertion routed to the internal construction on $L_2$ is also $O(x\log x)$. Thus every insertion, and hence every positive query, has worst-case expected cost $O(x\log x)$.

The negative-query argument is the same as in Section~5. For the outer layer, we have throughout phase 1 that
\[
\delta_1\ge\delta_1^\star
=\Theta(1/x)
\ge \frac{s_1}{\widehat x_{1}}.
\]
For $L_1^{g(u)}$, early insertions stop when $p_1(\delta_1)$ falls below $1/2$, at which point
\[
\delta_1=\Theta\left(\widehat x_{1}^{-1/3}\right).
\]
The expected distance to a position certifying that $u\notin L_1^{g(u)}$ is therefore $O(\widehat x_{1}^{2/3})$. For $L_1^{2-g(u)}$, conditioning on $f(u)=z$, the same argument at the point where $p_1(\delta_1)$ falls below $z/4$ gives expected certification distance
\[
O\left(\frac{\widehat x_{1}^{2/3}}{z^{4/3}}\right).
\]
Moreover, a late insertion is possible only when $z\ge \lambda/\widehat x_{1}$. Averaging over $z$ therefore gives
\[
\int_{\lambda/\widehat x_{1}}^1
O\left(\frac{\widehat x_{1}^{2/3}}{z^{4/3}}\right)\,dz
=O(\widehat x_{1}).
\]
Thus the expected distance needed to certify that $u\notin L_1$ is $O(\widehat x_{1})=O(x\log^2 x)$. Applying Theorem~\ref{thm:mainneg} inside $L_2$ and multiplying distances by $\log x$ gives
\[
O\left(\log x\cdot x_2\log^2 x_2\right)
=O(x\log^2 x).
\]
Taking the maximum of these two certification distances, and upper-bounding its expectation by their sum, gives worst-case expected negative-query cost $O(x\log^2 x)$.

\paragraph{Amortized expected cost.} By the standard amortized analysis of linear probing, the expected total cost of filling $L_1$ down to empty fraction $\delta_1^\star$ is
\[
O\left(\frac{s_1|L_1|}{\delta_1^\star}\right)=O(nx).
\]
The balancing argument from Section~5 shows that splitting $L_1$ into two sub-layers changes this by only a constant factor. As calculated above, all insertions into $L_2$ also have total expected cost $O(nx)$. Thus, the expected total cost of the first $(1-1/x)n=\Theta(n)$ insertions is $O(nx)$, giving amortized expected insertion cost $O(x)$. Since elements never move after being inserted, the average expected positive-query cost is also $O(x)$.

Putting the pieces together completes the proof of Theorem~\ref{thm:main-amortized}.

\section{A Lower Bound}

In this section, we prove an $\Omega(x\sqrt{\log x})$ lower bound on the
worst-case expected insertion cost of any stable linear-probing insertion
strategy.  In fact, the lower bound holds in a more permissive model: elements
may be moved after insertion, and they may be placed on either side of their
home locations, provided that moving an element by distance $d$ incurs cost
$d$.  A stable insertion strategy is the special case in which previously
inserted elements are never moved.

\paragraph{The model.}
We consider a circular hash table with $n$ cells, identified with
$\mathbb Z_n$.  A sequence of elements
\[
    u_1,u_2,\ldots,u_N
\]
is inserted into the table, where $N=(1-\epsilon)n$.  We assume for
simplicity that $(1-\epsilon)n$ is an integer.  Each element $u_i$ has a
home location $h(u_i)\in\mathbb Z_n$, and the home locations are mutually
independent and uniformly distributed.  The value $h(u_i)$ is revealed to
the algorithm when $u_i$ is inserted; in particular, the algorithm does not
know the home locations of future elements.  For a randomized algorithm, we
may equivalently sample its entire random tape before the first insertion,
independently of the home locations.

During an insertion, the algorithm may move both the newly inserted element
and previously inserted elements.  We view the newly inserted element as
starting at its home location.  At the end of each insertion, every inserted
element must occupy a distinct table cell.  The \defn{cyclic distance}
between two cells is the smaller of their clockwise and counterclockwise
distances.  Moving an element between two cells costs their cyclic distance,
and the cost of an insertion is the sum of the costs of all moves performed
during that insertion.

For each $j\in\mathbb Z_n$, define \defn{boundary $j$} to be the boundary
between cells $j-1\bmod n$ and $j$.  For the analysis, represent every move by the shorter of the clockwise and counterclockwise routes, with ties broken arbitrarily.  A move of cost $d$ is thereby
represented by exactly $d$ boundary crossings.

\paragraph{The lower bound.}
We prove the following theorem.

\begin{theorem}\label{thm:online-placement-lower-bound}
There are absolute constants $C,\epsilon_0>0$ such that the following
holds.  Let
\[
    0<\epsilon<\epsilon_0
    \qquad\text{and}\qquad
    n\ge C\epsilon^{-2}\log(1/\epsilon).
\]
Consider any possibly randomized online algorithm that inserts
$N=(1-\epsilon)n$ elements into a circular table of size $n$ under the model
above.  The algorithm may know $n$, $\epsilon$, and $N$ in advance.  Then
there is an insertion $i\in\{1,\ldots,N\}$ whose expected cost is
\[
    \Omega\!\left(
        \frac{\sqrt{\log(1/\epsilon)}}{\epsilon}
    \right).
\]
The expectation is over the home locations and any internal randomness of
the algorithm.
\end{theorem}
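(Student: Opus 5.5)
We will argue by contradiction. Suppose every insertion has expected cost at most $B=c\,\epsilon^{-1}\sqrt{\log(1/\epsilon)}$ for a small constant $c$. The total cost is then at most $NB$. The plan is to show that this budget cannot both produce the final $(1-\epsilon)$-dense configuration and keep each individual insertion cheap. The bookkeeping device is the boundary-crossing representation: the cost of an insertion equals the number of boundary crossings it performs. So it suffices to lower bound the expected number of crossings at a well-chosen boundary and time, and then average over boundaries by symmetry.

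We will use a multiscale discrepancy argument. For a scale $\ell$ and an arc $A$ of length $\ell$, the number of homes landing in $A$ after $t$ insertions fluctuates by about $\sqrt{t\ell/n}$ around its mean $t\ell/n$. If the arc's free capacity at time $t$ is about $(1-t/n)\ell$ and this is smaller than the fluctuation, then net flow must cross one of the two boundaries of $A$. This happens once $\ell \gtrsim (n/t)/(1-t/n)^2$. We will consider the times $t_k$ at which the free fraction is $\delta_k=2^{-k}$, for $\epsilon\le\delta_k\le \epsilon^{1/2}$; this gives about $\log(1/\epsilon)$ dyadic levels. For each level $k$, we will aim to show that during the window of insertions between $t_{k-1}$ and $t_k$, about $\delta_k n$ elements are inserted. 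We expect each boundary to see expected net crossings of order $\sqrt{\delta_k n\cdot \ell_k/n}$ with $\ell_k\approx\delta_k^{-2}$. The new homes in the window produce a random-walk surplus in arcs of length $\ell_k$ that the almost-full region cannot absorb locally.

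Summing naively over levels would give $\epsilon^{-1}\log(1/\epsilon)$, which is too strong and false given the upper bound. The honest version is to write, for a fixed boundary, the total displacement flux across it as a sum over levels of contributions $X_k$. Here $X_k$ is the net flux forced by the fresh Gaussian-like fluctuations at level $k$. These fluctuations come from nearly independent batches of homes, so the $X_k$ behave like independent mean-zero variables of scale about $\epsilon^{-1}/\sqrt{\log}$ each, after normalizing so that the total amortized budget is $\Theta(\epsilon^{-1})$ per element, as in the Peterson/Knuth average bound. The algorithm can cancel the flux across levels, but only through choices made online, before the future batches are revealed. Absolute flux $|\sum X_k|$ is therefore forced, and the relevant quantity becomes a square-root sum $\sqrt{\sum_k \mathrm{Var}}$. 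The key point, and the source of the $\sqrt{\log}$, is that to keep late insertions cheap the algorithm must pre-reserve space, and this costs the worst-case window budget. Concretely, the plan is to show that the insertions in some window of length about $\delta_k n$ must collectively pay $\Omega(\delta_k n\cdot \epsilon^{-1}\sqrt{\log(1/\epsilon)})$. This comes from a potential argument: define $\Phi(t)=\sum_j |F_j(t)|$, where $F_j(t)$ is the signed net crossing at boundary $j$ minus its "ideal" value determined by the current home counts. Then show that $\E\Phi$ increases across the levels like a martingale whose quadratic variation accumulates over $\log(1/\epsilon)$ levels.

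The technical steps will be carried out in this order:
\begin{enumerate}
\item Couple the home process to independent per-level batches, and prove the per-level discrepancy lemma with constant probability. This step uses $n\ge C\epsilon^{-2}\log(1/\epsilon)$ so that all scales fit inside the table.
\item Show that the martingale increments of $F_j$ are not predictable from the past, by an online-information argument: the increments depend on future homes.
\item Conclude from the triangle inequality and the budget per window that some insertion has expected cost $\Omega(\epsilon^{-1}\sqrt{\log(1/\epsilon)})$.
\end{enumerate}
The main obstacle is step 2. An adaptive algorithm can move old elements and correlate its responses across levels, so the increments need not be orthogonal. I would try to handle this by lower bounding each level's increment conditional on the full past via an anticoncentration bound for the fresh batch (a Paley–Zygmund bound on the arc surplus). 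I would then use conditional orthogonality of the unpredictable parts to get the $\sqrt{\log}$ growth instead of linear growth.
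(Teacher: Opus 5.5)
Your proposal has a genuine gap. It never supplies a mechanism that forces cost into a particular window, and that is the entire difficulty. A Peterson/Knuth flux argument forces $\Omega(n/\delta)$ total movement by the time the free fraction is $\delta$. However, the algorithm is free to pay for this in advance, and the paper's own upper bound shows the total cost can be $O(n/\epsilon)$. Hence any lower bound on \emph{accumulated} flux, such as $\E\Phi\gtrsim\sqrt{\sum_k\mathrm{Var}(X_k)}$, can at best recover the amortized $\Omega(1/\epsilon)$. In your plan the $\sqrt{\log(1/\epsilon)}$ enters only by postulating two things: that each $X_k$ has scale $\epsilon^{-1}/\sqrt{\log(1/\epsilon)}$, and that some window ``must collectively pay'' $\Omega(\delta_k n\,\epsilon^{-1}\sqrt{\log(1/\epsilon)})$. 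Neither is derived, and the ``ideal value'' defining $F_j$ is never specified. So steps 2 and 3 restate the goal rather than sketch a proof.

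The per-level heuristic is also incorrect. The homes arriving during window $k$ perturb an arc of length $\ell_k=\delta_k^{-2}$ by only $\sqrt{\delta_k\ell_k}=\delta_k^{-1/2}$. This is far below that arc's free capacity $\Theta(\delta_k^{-1})$, so nothing in your argument forces the surplus across a boundary. Conversely, if $\delta_k^{-1/2}$ crossings per boundary really were forced, the final window ($\delta_k=\epsilon$, about $\epsilon n$ insertions) would cost $\epsilon^{-3/2}$ per insertion, contradicting the $O(x\log x)$ upper bound. Your threshold is also reversed: fluctuation $\sqrt{\ell}$ exceeds free capacity $\delta\ell$ when $\ell\lesssim\delta^{-2}$, not $\gtrsim$.

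The missing ingredient is a constraint on the \emph{final} configuration that no pre-payment can evade. The paper proves deterministically that every $A$ with $|A|=(1-\epsilon)n$ satisfies $\sum_s\Delta_s(A)^2=O(\epsilon^2)$ over power-of-two scales $s$, where $\Delta_s$ averages the density difference of adjacent live $s$-blocks. The proof tracks $(\rho_A(I)-q)_+^2$ along a random nested dyadic chain of intervals. So your quadratic-variation intuition is sound, but it lives across \emph{spatial scales}, not time levels.

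Pigeonholing over the $\Theta(\log(1/\epsilon))$ admissible scales fixes, in advance and independently of the randomness, one $s$ with $\E[\Delta_s(A_{\mathrm{final}})]=O(\epsilon/\sqrt{\log(1/\epsilon)})$. Only then is a suffix length $t\approx bns\epsilon^2/\log(1/\epsilon)$ chosen. The window--scale coupling is therefore $t/n\approx s\epsilon^2/\log(1/\epsilon)$, not $\ell_k\approx\delta_k^{-2}$.

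By anticoncentration, the last $t$ homes leave the uncorrected count difference $|D_0+Z|$ of two adjacent $s$-blocks at $\Omega(\sqrt{ts/n})$ with constant probability, whatever the pre-suffix difference $D_0$ is. This is the single place onlineness is used. For large $b$, this exceeds the expected final difference $O(s\epsilon/\sqrt{\log(1/\epsilon)})$ on live pairs. Averaging the forced boundary crossings over a uniformly random block pair gives $\Omega(\sqrt{nst})$ movement during the suffix. That is $\Omega(\sqrt{ns/t})=\Omega(\sqrt{\log(1/\epsilon)}/\epsilon)$ per suffix insertion on average, with no sum over levels and no cross-level orthogonality needed.
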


Setting $\epsilon=1/x$ gives the claimed
$\Omega(x\sqrt{\log x})$ lower bound.

\subsection{Proof Overview}

The proof compares, at a carefully chosen scale, the imbalance generated by the random home locations with the smaller occupancy imbalance in the algorithm's final placement. The discrepancy between the two must be accounted for by movement across the blocks' boundaries.

Our terminology generally follows Proposition~19 of
\cite{bender2022online}.  At scale $s$, an $s$-block is a circular interval of exactly
$s$ cells.  For a starting cell $r$, we consider two adjacent $s$-blocks
$B_1(r,s)$ and $B_2(r,s)$.  We call the pair \emph{live} if its union has
density at least $1-100\epsilon$, and \emph{dead} otherwise.  The pair's
\emph{adjusted imbalance} is the absolute difference between the densities
of the two blocks when the pair is live, and zero when it is dead.  We write
$\Delta_s(A)$ for the average adjusted imbalance over the starting cell $r$
in an occupied set $A$.  Equivalently, our scale $s$ corresponds to the
scale $2s$ in the indexing convention of \cite{bender2022online}.

The proof has three steps.

\paragraph{1. Selecting a block size with small final imbalance.}
Section~\ref{sec:block-density-difference-sum} proves a deterministic
multiscale bound for the final occupied set.  A potential argument across a
nested family of power-of-two block sizes gives
\[
    \sum_s \Delta_s(A)^2=O(\epsilon^2).
\]
There are $\Theta(\log(1/\epsilon))$ relevant sizes, so there exists one
power-of-two value $s$ satisfying
\[
    \E[\Delta_s(A_{\mathrm{final}})]
    =O\!\left(
        \frac{\epsilon}{\sqrt{\log(1/\epsilon)}}
    \right).
\]
We fix this value of $s$ for the entire proof. It is important that $s$ is fixed using the expected final
imbalances $\E[\Delta_s(A_{\mathrm{final}})]$, rather than being selected from the realized final configuration. Thus, $s$ is a deterministic function of
the algorithm, $n$, and $\epsilon$, and is independent of the home
locations and random tape in any particular execution. Consequently,
after fixing $s$, conditioning on the home locations of earlier insertions does not reveal any information about the later insertions.

\paragraph{2. The final home locations create a larger imbalance.}
We then study the imbalance created during a length-$t$ suffix of the insertion sequence.
Section~\ref{sec:random-count-difference} shows that, regardless of the imbalance already present before the suffix, the random home locations in the suffix produce an uncorrected count imbalance between two adjacent $s$-blocks of order
\[
    \Omega\!\left(\sqrt{\frac{ts}{n}}\right)
\]
with constant probability, but the scale $s$ was chosen to have expected final count imbalance $O\!\left(
        {s\epsilon}/{\sqrt{\log(1/\epsilon)}}
    \right)$. Choosing
\[
    t
    \in\Theta\!\left(
        \frac{ns\epsilon^2}{\log(1/\epsilon)}
    \right)
\]
with a sufficiently large implicit constant makes the suffix imbalance exceed the expected final count difference by a constant factor with constant probability.

\paragraph{3. Correcting the imbalance requires movement.}
Section~\ref{sec:main-proof} combines the two ingredients.  For a uniformly
random pair of adjacent $s$-blocks, the event that the pair is dead has only
small constant probability.  On the complementary event, the large
imbalances created by the suffix home locations must be reconciled with the
small final occupancy difference.  This requires movement of elements across the blocks' boundaries by the algorithm to balance their occupancies.

Let $K$ be the total insertion cost during the final $t$ insertions.  The
expected number of relevant boundary crossings witnessed by a uniformly
random starting cell is $4K/n$.  It follows that
\[
    \E[K]=\Omega(\sqrt{nst}).
\]
Dividing by $t$ gives an average expected cost during the suffix of
\[
    \Omega\!\left(\sqrt{\frac{ns}{t}}\right)
    =
    \Omega\!\left(
        \frac{\sqrt{\log(1/\epsilon)}}{\epsilon}
    \right),
\]
and hence at least one insertion has the claimed expected cost.

\subsection{A Power-of-Two Scale with Small Final Occupancy Imbalance}
\label{sec:block-density-difference-sum}

We first analyze the final occupied set and select the spatial scale that
will later be used in the suffix argument.  No suffix length is fixed at this
stage.

For a set $A\subseteq\mathbb Z_n$ of occupied cells and a circular interval
$B$, define its density by
\[
    \rho_A(B)=\frac{|A\cap B|}{|B|}.
\]
An \defn{$s$-block} is a circular interval containing exactly $s$ cells.  For
$r\in\mathbb Z_n$ and an integer $s$ with $2s\le n$, let
\[
\begin{aligned}
    B_1(r,s)&=\{r,r+1,\ldots,r+s-1\},\\
    B_2(r,s)&=\{r+s,r+s+1,\ldots,r+2s-1\},
\end{aligned}
\]
where all indices are taken modulo $n$.  Thus $B_1(r,s)$ and $B_2(r,s)$ are
two adjacent $s$-blocks.  Write
\[
    U(r,s)=B_1(r,s)\cup B_2(r,s)
\]
for their union, which contains $2s$ cells.

We call the pair $(B_1(r,s),B_2(r,s))$ \defn{live} if
$\rho_A(U(r,s))\ge 1-100\epsilon$, and \defn{dead} otherwise.  Define the
pair's adjusted imbalance by
\[
    \Delta(r,s;A)
    =
    \left|
        \rho_A(B_1(r,s))-\rho_A(B_2(r,s))
    \right|
    \mathbf 1_{\{\rho_A(U(r,s))\ge 1-100\epsilon\}},
\]
and define the average adjusted imbalance at scale $s$ by
\[
    \Delta_s(A)
    =
    \frac1n\sum_{r\in\mathbb Z_n}\Delta(r,s;A).
\]
In other words, $\Delta_s(A)$ averages the density difference between two
adjacent $s$-blocks, while dead pairs contribute zero.

\begin{lemma}\label{lem:block-density-difference-sum}
For every $0<\epsilon\le 1/100$ and every
$A\subseteq\mathbb Z_n$ with $|A|=(1-\epsilon)n$,
\[
    \sum_{\substack{s\text{ a power of two}\\s\le n/8}}
        \Delta_s(A)^2
    =O(\epsilon^2).
\]
\end{lemma}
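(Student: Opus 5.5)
The plan is a telescoping potential argument over dyadic scales. The naive potential, the average squared deficiency of $s$-blocks, only gives a total of $O(\epsilon)$. To get $O(\epsilon^2)$ I will replace the square by a convex function that is quadratic on $[0,M\epsilon]$ and linear beyond. The live condition is exactly what keeps every relevant block inside the quadratic regime.

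\textbf{Setup.} For a circular interval $B$ write $e(B)=1-\rho_A(B)\in[0,1]$ for its deficiency. Fix $M=200$ and define
\[
\psi(e)=\begin{cases} e^2, & e\le M\epsilon,\\ 2M\epsilon e-M^2\epsilon^2, & e> M\epsilon.\end{cases}
\]
This function is $C^1$, convex and nonnegative, and satisfies $\psi(e)\le 2M\epsilon e$ for all $e\ge 0$. For a power of two $s$, set $W_s=\frac1n\sum_{r\in\mathbb Z_n}\psi(e(B_1(r,s)))$. For a single cell the deficiency is $0$ or $1$, and the average deficiency over cells is $\epsilon$. Hence $W_1\le 2M\epsilon\cdot\epsilon=O(\epsilon^2)$, and $W_s\ge 0$ for all $s$.

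\textbf{One-step decrement.} Since $e(U(r,s))=\tfrac12\bigl(e(B_1(r,s))+e(B_2(r,s))\bigr)$, convexity gives
\[
\tfrac12\bigl(\psi(e(B_1))+\psi(e(B_2))\bigr)\ge \psi(e(U))
\]
for every $r$. Now suppose the pair is live. Then $e(U)\le 100\epsilon$, so each $e(B_j)\le 2e(U)\le 200\epsilon=M\epsilon$. All three values then lie in the quadratic regime, and the parallelogram identity gives the exact gap
\[
\tfrac12\bigl(e_1^2+e_2^2\bigr)-\Bigl(\tfrac{e_1+e_2}{2}\Bigr)^2=\tfrac14(e_1-e_2)^2=\tfrac14\Delta(r,s;A)^2 .
\]
So for every $r$, live or dead, the left side minus $\psi(e(U))$ is at least $\tfrac14\Delta(r,s;A)^2$.

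\textbf{Averaging and telescoping.} Average this inequality over $r\in\mathbb Z_n$. By circular shift invariance, both $B_1(r,s)$ and $B_2(r,s)=B_1(r+s,s)$ average to $W_s$, and $U(r,s)=B_1(r,2s)$ averages to $W_{2s}$. This yields
\[
W_s-W_{2s}\ge \tfrac14\cdot\tfrac1n\sum_r\Delta(r,s;A)^2\ge \tfrac14\Delta_s(A)^2,
\]
where the last step is Cauchy--Schwarz (Jensen) applied to the average. Summing over powers of two $s\le n/8$ telescopes, so the sum is at most $4(W_1-W_{\text{last}})\le 4W_1=O(\epsilon^2)$. The constraint $s\le n/8$ only ensures $2s\le n$, so that all blocks are proper intervals and the shift argument is valid.

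\textbf{Main obstacle.} The key step is choosing the truncated potential $\psi$. With the plain square, the initial potential is $\epsilon$ rather than $\epsilon^2$, because single cells have deficiency $1$. The fix is to make $\psi$ linear above $M\epsilon$, which caps $W_1$ at $O(\epsilon^2)$, while convexity keeps every decrement nonnegative, so dead pairs cost nothing. The one thing to verify is that live pairs never leave the quadratic range. This holds because the constant $100$ in the definition of liveness feeds into the choice $M=2\cdot 100$.
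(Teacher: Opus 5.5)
Your proof is correct, and it shares the paper's skeleton. Both telescope a convex potential across dyadic scales, use convexity to get a nonnegative increment for every pair, extract $\Delta(r,s;A)^2$ from live pairs, and finish with Cauchy--Schwarz. Your averaging over all shifts $r$ is equivalent to the paper's random nested chain $I_0\supset I_1\supset\cdots$, since each level of that chain is a uniformly random interval of its length.

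The genuine difference is the choice of potential. The paper uses $\Phi(I)=(\rho_A(I)-q)_+^2$ with $q=1-100\epsilon$, which in your notation is $(100\epsilon-e)_+^2$. This potential vanishes on intervals below the threshold and is bounded by $(100\epsilon)^2$ everywhere. So the single-cell endpoint is trivial, and the hypothesis $|A|=(1-\epsilon)n$ is never used; the paper's argument actually bounds $\sum_s\Delta_s(A)^2$ for every $A\subseteq\mathbb Z_n$. The price is that when $\rho_A(I)\ge q$, one child may still have density below $q$, where $\Phi$ is flat rather than quadratic. The paper therefore needs a two-case computation and only obtains an increment of $\delta^2/8$.

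Your Huber-type $\psi$ is instead quadratic on $[0,200\epsilon]$. Your observation $e(B_j)\le 2e(U)$ forces both children of a live pair into that range, giving the exact parallelogram gap $\tfrac14\Delta(r,s;A)^2$ with no case split. In exchange $\psi$ is unbounded, so you must control $W_1$ through the domination $\psi(e)\le 2M\epsilon e$ together with the global count of $\epsilon n$ empty cells. That is legitimate here, since the lemma assumes $|A|=(1-\epsilon)n$. Your final constant, $8M\epsilon^2=1600\epsilon^2$, is also smaller than the paper's $8\cdot 100^2\epsilon^2$.
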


The lemma is deterministic and holds for every fixed occupied set $A$.
It is a direct result of Proposition~19 of \cite{bender2022online}; we
include the proof for completeness.

\begin{proof}
Set $q=1-100\epsilon$.  For every circular interval $I$, define the potential
\[
    \Phi(I)=\bigl(\rho_A(I)-q\bigr)_+^{\;2},
    \qquad
    (z)_+=\max\{z,0\}.
\]
Thus the potential is zero below the density threshold $q$ and is the squared
excess above $q$ otherwise.

Let $I_1,I_2$ be two adjacent blocks of equal length, let
$I=I_1\cup I_2$, and set
\[
    \delta=\left|\rho_A(I_1)-\rho_A(I_2)\right|.
\]
After possibly exchanging $I_1$ and $I_2$, we have
\[
    \rho_A(I_1)=\rho_A(I)+\frac{\delta}{2},
    \qquad
    \rho_A(I_2)=\rho_A(I)-\frac{\delta}{2}.
\]
We claim that
\begin{equation}\label{eq:potential-increment}
    \frac{\Phi(I_1)+\Phi(I_2)}{2}
    \ge
    \Phi(I)
    +
    \frac{\delta^2}{8}
    \mathbf 1_{\{\rho_A(I)\ge q\}}.
\end{equation}

Suppose first that $\rho_A(I)\ge q$, and set
$a=\rho_A(I)-q\ge0$.  Then
\[
    \frac{\Phi(I_1)+\Phi(I_2)}{2}
    =
    \frac12\left(a+\frac{\delta}{2}\right)_+^2
    +
    \frac12\left(a-\frac{\delta}{2}\right)_+^2.
\]
If $a\ge\delta/2$, this expression is
$a^2+\delta^2/4$.  If $a<\delta/2$, it is
$\frac12(a+\delta/2)^2$, and
\[
    \frac12\left(a+\frac{\delta}{2}\right)^2
    -a^2-\frac{\delta^2}{8}
    =
    \frac{a(\delta-a)}{2}
    \ge0.
\]
Thus \eqref{eq:potential-increment} holds when $\rho_A(I)\ge q$.  If
$\rho_A(I)<q$, then $\Phi(I)=0$ and the indicator vanishes, while
$\Phi(I_1),\Phi(I_2)\ge0$, so the inequality again holds.

If \(n<8\), the sum is empty and the result is immediate. Hence, assume \(n\ge8\), and let $s_{\max}$ be the largest power of two satisfying
$s_{\max}\le n/8$. Choose $I_0$ uniformly among all circular intervals of
length $2s_{\max}$.  Given $I_j$, partition it into two adjacent intervals
of equal length and choose $I_{j+1}$ uniformly from these two intervals.
Continue until reaching a single-cell interval $I_{\mathrm{final}}$.  At
every level, $I_j$ is uniformly distributed among all circular intervals of
its length.

At the level where $I_j$ has length $2s$,
\eqref{eq:potential-increment} gives
\[
    \E\left[\Phi(I_{j+1})-\Phi(I_j)\right]
    \ge
    \frac18
    \E\left[
        \delta^2\mathbf 1_{\{\rho_A(I_j)\ge q\}}
    \right].
\]
By Cauchy--Schwarz,
\[
\begin{aligned}
    \E\left[
        \delta^2\mathbf 1_{\{\rho_A(I_j)\ge q\}}
    \right]
    &\ge
    \left(
        \E\left[
            \delta\mathbf 1_{\{\rho_A(I_j)\ge q\}}
        \right]
    \right)^2\\
    &=\Delta_s(A)^2.
\end{aligned}
\]
Summing over the levels therefore yields
\[
    \E[\Phi(I_{\mathrm{final}})]-\E[\Phi(I_0)]
    \ge
    \frac18
    \sum_{\substack{s\text{ a power of two}\\s\le n/8}}
        \Delta_s(A)^2.
\]
Since $\Phi(I_0)\ge0$ and $I_{\mathrm{final}}$ is a single cell,
$\Phi(I_{\mathrm{final}})\le(100\epsilon)^2$.  Hence
\[
    \sum_{\substack{s\text{ a power of two}\\s\le n/8}}
        \Delta_s(A)^2
    \le
    8\cdot100^2\epsilon^2,
\]
which proves the lemma.
\end{proof}

\subsection{Random Home Locations Create Occupancy Imbalance}
\label{sec:random-count-difference}

We next isolate the probabilistic fact used after the block size has been
selected.  The statement applies to arbitrary $s$ and $t$, which will later be fixed in
Section~\ref{sec:main-proof}.

Consider two disjoint blocks $B_1,B_2$, each of size $s$, and the final $t$
insertions.  Before these insertions, let $d$ be the difference between the
numbers of occupied cells in $B_1$ and $B_2$.  Let $Z$ be the number of the
final $t$ home locations in $B_1$ minus the number in $B_2$.  The next lemma
shows that, regardless of the fixed value $d$, the magnitude of $d+Z$ is at
least a constant multiple of the standard deviation of $Z$ with constant
probability.

\begin{lemma}\label{lem:random-count-difference}
Let $n,s,t$ be positive integers with $2s\le n$, and let
$Z=X_1+\cdots+X_t$, where $X_1,\ldots,X_t$ are independent random variables
satisfying
\[
    X_i=
    \begin{cases}
        1  & \text{with probability }s/n,\\
        -1 & \text{with probability }s/n,\\
        0  & \text{with probability }1-2s/n.
    \end{cases}
\]
Set
\[
    \sigma^2=\E[Z^2]=\frac{2ts}{n}.
\]
If $\sigma^2\ge1$, then, for every $d\in\mathbb R$,
\[
    \Prb\!\left[
        |d+Z|\ge\frac{\sigma}{\sqrt2}
    \right]
    \ge\frac1{16}.
\]
\end{lemma}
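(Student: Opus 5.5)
We will prove this by an anti-concentration argument based on the second and fourth moments of $Z$, combined with the symmetry of $Z$.

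First, by Paley--Zygmund applied to $Z^2$, we get $\Prb[Z^2\ge\sigma^2/2]\ge\frac14\,\E[Z^2]^2/\E[Z^4]$. We then compute $\E[Z^4]$. Since the $X_i$ are independent and symmetric, with $\E[X_i^2]=\E[X_i^4]=2s/n=:q$, only the fully diagonal terms and the paired terms survive:
\[
    \E[Z^4]=tq+3t(t-1)q^2\le \sigma^2+3\sigma^4.
\]
The hypothesis $\sigma^2\ge 1$ gives $\sigma^2\le\sigma^4$, so $\E[Z^4]\le 4\sigma^4$. Hence
\[
    \Prb\!\left[|Z|\ge\frac{\sigma}{\sqrt2}\right]\ge\frac14\cdot\frac{\sigma^4}{4\sigma^4}=\frac1{16}.
\]

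Next we handle an arbitrary shift $d$, and this is the only real obstacle: a shift could in principle cancel $Z$. Symmetry of $Z$ resolves it. Since $Z$ and $-Z$ have the same distribution, $\Prb[Z\ge\sigma/\sqrt2]=\Prb[Z\le-\sigma/\sqrt2]\ge\frac1{32}$. If $d\ge0$, then on the event $Z\ge\sigma/\sqrt2$ we have $d+Z\ge\sigma/\sqrt2$. If $d<0$, the event $Z\le-\sigma/\sqrt2$ works in the same way. Either way this only gives $\frac1{32}$, so the bound must be sharpened.

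To recover $\frac1{16}$, we avoid splitting into one-sided tails and instead apply Paley--Zygmund directly to $W=(d+Z)^2$. We have $\E[W]=d^2+\sigma^2\ge\sigma^2$. For the fourth moment, using that the odd moments of $Z$ vanish,
\[
    \E[W^2]=\E[(d+Z)^4]=d^4+6d^2\sigma^2+\E[Z^4]\le d^4+6d^2\sigma^2+4\sigma^4\le 4(d^2+\sigma^2)^2.
\]
The last inequality holds because $4(d^2+\sigma^2)^2=4d^4+8d^2\sigma^2+4\sigma^4$ dominates term by term. Paley--Zygmund with threshold $\frac12\E[W]$ then gives
\[
    \Prb\!\left[W\ge\tfrac12(d^2+\sigma^2)\right]\ge\frac14\cdot\frac{(d^2+\sigma^2)^2}{4(d^2+\sigma^2)^2}=\frac1{16}.
\]
On this event $|d+Z|\ge\sqrt{(d^2+\sigma^2)/2}\ge\sigma/\sqrt2$, which proves the lemma uniformly in $d$. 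The whole argument therefore rests on the fourth-moment bound $\E[Z^4]\le 4\sigma^4$, which is exactly where the hypothesis $\sigma^2\ge1$ is used.
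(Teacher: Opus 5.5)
Your proposal is correct and is essentially the paper's proof. Both bound $\E[Z^4]\le\sigma^2+3\sigma^4\le 4\sigma^4$, deduce $\E[(d+Z)^4]\le 4\,\E[(d+Z)^2]^2$, and apply a second-moment argument to $(d+Z)^2$ at threshold $\tfrac12(d^2+\sigma^2)$. The paper writes out the Cauchy--Schwarz step instead of citing Paley--Zygmund, and your one-sided symmetry detour giving $1/32$ is unnecessary and can simply be dropped.
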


This is a special case of the general anti-concentration principle that a
sum of independent bounded random variables with standard deviation $\tilde\sigma >1$ has constant probability of being a constant fraction of $\tilde\sigma$ away from any fixed shift; see, for example, Theorem~12 of
\cite{kuszmaul2025simple}.  We include a direct proof.

\begin{proof}
The distribution of $Z$ is symmetric.  Moreover,
\[
    \E[X_i^2]=\E[X_i^4]=\frac{2s}{n}.
\]
By independence and symmetry,
\[
\begin{aligned}
    \E[Z^4]
    &=
    \sum_{i=1}^t\E[X_i^4]
    +
    6\sum_{1\le i<j\le t}\E[X_i^2]\E[X_j^2]\\
    &\le
    \sigma^2+3\sigma^4
    \le
    4\sigma^4,
\end{aligned}
\]
where the final inequality uses $\sigma^2\ge1$.

Fix $d\in\mathbb R$ and set $Y=d+Z$.  Then
\[
    \E[Y^2]=d^2+\sigma^2
\]
and
\[
\begin{aligned}
    \E[Y^4]
    &=d^4+6d^2\sigma^2+\E[Z^4]\\
    &\le d^4+6d^2\sigma^2+4\sigma^4\\
    &\le 4(d^2+\sigma^2)^2
    =4\E[Y^2]^2.
\end{aligned}
\]
Let
\[
    \mathcal E
    =
    \left\{
        Y^2\ge\frac12\E[Y^2]
    \right\}.
\]
Since $Y^2<\frac12\E[Y^2]$ outside $\mathcal E$,
\[
    \frac12\E[Y^2]
    \le
    \E[Y^2\mathbf 1_{\mathcal E}]
    \le
    \sqrt{\E[Y^4]\Prb[\mathcal E]}.
\]
Consequently,
\[
    \Prb[\mathcal E]
    \ge
    \frac{\E[Y^2]^2}{4\E[Y^4]}
    \ge
    \frac1{16}.
\]
On $\mathcal E$,
\[
    |Y|
    \ge
    \sqrt{\frac{d^2+\sigma^2}{2}}
    \ge
    \frac{\sigma}{\sqrt2},
\]
which proves the lemma.
\end{proof}

\subsection{Correcting the Final-Suffix Imbalance Requires Movement}
\label{sec:main-proof}

We now combine the two ingredients:  First, we fix a scale $s$ with small final imbalance using the multiscale bound. We then choose the suffix length $t$ as a function of $s$, so that the random imbalance contributed by the suffix home locations has the required magnitude at scale $s$. Finally, we average over a uniformly random starting position $r$ to finish the proof.

\begin{proof}[Proof of Theorem~\ref{thm:online-placement-lower-bound}]
Let
\[
    \ell=\log(1/\epsilon),
\]
and let $b\ge1$ be a constant to be chosen below.

\paragraph{Choosing a block size with small final imbalance.}
Let
\begin{equation}\label{eq:block-size-range}
    \mathcal S
    =
    \left\{
        s\text{ a power of two}:
        \frac{\sqrt\ell}{\epsilon}
        \le s\le
        \frac{\ell}{100b\epsilon^2}
    \right\}.
\end{equation}
The ratio of the upper and lower endpoints is
$\sqrt\ell/(100b\epsilon)$.  Thus, after choosing $\epsilon_0>0$
sufficiently small, $|\mathcal S|=\Theta(\ell)$.  After choosing $C$
sufficiently large, every $s\in\mathcal S$ also satisfies $s\le n/8$.

Let $A_{\mathrm{final}}\subseteq\mathbb Z_n$ be the random occupied set after
all insertions, and define
\[
    \overline\Delta_s
    =
    \E[\Delta_s(A_{\mathrm{final}})],
\]
where the expectation is over the home locations and the algorithm's randomness.  By Lemma~\ref{lem:block-density-difference-sum} and Jensen's
inequality,
\[
\begin{aligned}
    \sum_{s\in\mathcal S}\overline\Delta_s^2
    &\le
    \sum_{s\in\mathcal S}
        \E[\Delta_s(A_{\mathrm{final}})^2]\\
    &=O(\epsilon^2).
\end{aligned}
\]
Since $|\mathcal S|=\Theta(\ell)$, some fixed $s\in\mathcal S$ satisfies
\begin{equation}\label{eq:small-final-density-difference}
    \overline\Delta_s
    =O\!\left(\frac{\epsilon}{\sqrt\ell}\right).
\end{equation}

\paragraph{Choosing a suffix whose random imbalance dominates.}
Fix this value of $s$, and set
\begin{equation}\label{eq:number-of-final-insertions}
    t
    =
    \left\lfloor
        \frac{bns\epsilon^2}{\ell}
    \right\rfloor.
\end{equation}
By the upper bound on $s$,
\[
    t
    \le
    \frac{bns\epsilon^2}{\ell}
    \le
    \frac n{100}.
\]
Choosing $\epsilon_0\le1/2$ ensures that
$N=(1-\epsilon)n\ge n/2$, so the final suffix of length $t$ is well-defined.

The assumptions $n\ge C\epsilon^{-2}\ell$ and
$s\ge\sqrt\ell/\epsilon$ imply
\[
    \frac{bns\epsilon^2}{\ell}
    \ge
    \frac{bC\sqrt\ell}{\epsilon}.
\]
Choosing $\epsilon_0$ sufficiently small makes the right-hand side at least
$2$, and hence
\begin{equation}\label{eq:t-lower-bound}
    t
    \ge
    \frac{bns\epsilon^2}{2\ell}.
\end{equation}

\paragraph{Unseen home locations create a large occupancy difference.}
Choose $r\in\mathbb Z_n$ uniformly at random, independently of the home
locations and the algorithm's random tape, and let
\[
    B_1=B_1(r,s),
    \qquad
    B_2=B_2(r,s),
    \qquad
    U=U(r,s)=B_1\cup B_2.
\]
Let $D_0$ be the number of occupied cells in $B_1$ minus the number in $B_2$
immediately before the final $t$ insertions, and let $D_1$ be the corresponding
difference after all insertions.  Let $Z$ be the number of the final $t$ home
locations in $B_1$ minus the number in $B_2$.

Condition on $r$, the home locations of the first $N-t$ elements, and the
algorithm's random tape.  Because the algorithm is online, this conditioning
fixes $D_0$, while the final $t$ home locations remain independent and uniform.
Thus $Z$ has the distribution in Lemma~\ref{lem:random-count-difference} with
\begin{equation}\label{eq:sigma-lower-bound}
\sigma^2
    :=
    \E[Z^2]
    =
    \frac{2ts}{n}, \qquad
    \sigma
    = \sqrt{\frac{2ts}{n}} \ge
    \sqrt b\,\frac{s\epsilon}{\sqrt\ell},
\end{equation}
where the last inequality is from \eqref{eq:t-lower-bound}.
Since $s\ge\sqrt\ell/\epsilon$, this also gives
$\sigma^2\ge b\ge1$.

Let $\mathcal L$ be the event that the pair $(B_1,B_2)$ is live in the final
configuration:
\[
    \mathcal L
    =
    \left\{
        \rho_{A_{\mathrm{final}}}(U)
        \ge
        1-100\epsilon
    \right\}.
\]
For any fixed final configuration, each empty cell belongs to $U(r,s)$ for
exactly $2s$ of the $n$ choices of $r$.  Hence the expected number of empty
cells in $U$, over the uniform choice of $r$, is $2s\epsilon$.  If
$\mathcal L$ fails, $U$ contains more than $200s\epsilon$ empty cells.
Markov's inequality therefore gives
\begin{equation}\label{eq:low-density-probability}
    \Prb[\mathcal L^c]
    \le
    \frac1{100}.
\end{equation}

Lemma~\ref{lem:random-count-difference} applies under every value of the
conditioning above, and hence also unconditionally:
\[
    \Prb\!\left[
        |D_0+Z|\ge\frac{\sigma}{\sqrt2}
    \right]
    \ge
    \frac1{16}.
\]
By the union bound,
\[
    \Prb\!\left[
        |D_0+Z|\ge\frac{\sigma}{\sqrt2}
        \ \text{and}\ 
        \mathcal L
    \right]
    \ge
    \frac1{16}-\frac1{100}.
\]
Consequently, for an absolute constant $c_1>0$,
\begin{equation}\label{eq:initial-plus-arrivals}
    \E\left[
        |D_0+Z|\mathbf 1_{\mathcal L}
    \right]
    \ge
    c_1\sigma.
\end{equation}

\paragraph{Restoring the final balance requires boundary crossings.}
For each boundary $j\in\mathbb Z_n$, let $T_j$ be the number of times it is crossed during the final $t$ insertions, in either direction. Define the crossings witnessed at $U(r,s)$ by
\[
    W_r=T_r+2T_{r+s}+T_{r+2s},
\]
where the shared boundary between $B_1$ and $B_2$ is counted twice for each crossing over it. As the net change in the difference between the counts of $B_1$ and $B_2$ caused by boundary crossings is $D_1-(D_0+Z)$,
\begin{equation}\label{eq:count-difference-identity}
W_r
\ge
\left|D_1-(D_0+Z)\right|
\ge
|D_0+Z|-|D_1|.
\end{equation}
Since $W_r\ge0$, restricting to the event $\mathcal L$ and taking expectations gives
\begin{equation}\label{eq:movement-lower-bound-1}
\E[W_r]
\ge
\E\left[
|D_0+Z|\mathbf 1_{\mathcal L}
\right]
-
\E\left[
|D_1|\mathbf 1_{\mathcal L}
\right].
\end{equation}

On $\mathcal L$, the final count difference is $s$ times the final density difference. Hence
\[
    \E\left[
        |D_1|\mathbf 1_{\mathcal L}
    \right]
    =
    s\overline\Delta_s
    =
    O\!\left(\frac{s\epsilon}{\sqrt\ell}\right)
    =
    O\!\left(\frac{\sigma}{\sqrt b}\right),
\]
where the last step uses \eqref{eq:sigma-lower-bound}. By choosing $b$ sufficiently large, this term is at most half the lower bound in \eqref{eq:initial-plus-arrivals}. Substituting the two bounds into \eqref{eq:movement-lower-bound-1} yields
\begin{equation}\label{eq:movement-lower-bound-2}
\E[W_r]
=
\Omega(\sigma).
\end{equation}

\paragraph{Averaging boundary crossings into movement cost.}
Let $K$ be the total movement cost during the final $t$ insertions.  For every
fixed execution,
\[
    K=\sum_{j\in\mathbb Z_n}T_j.
\]
Averaging $W_r$ over the uniform choice of $r$ gives, pointwise,
\[
\begin{aligned}
    \E_r[W_r]
    &=
    \frac1n\sum_{r\in\mathbb Z_n}
        \bigl(T_r+2T_{r+s}+T_{r+2s}\bigr)\\
    &=\frac{4K}{n}.
\end{aligned}
\]
Taking expectation over the execution and using
\eqref{eq:movement-lower-bound-2},
\[
    \frac4n\E[K]
    =
    \E[W_r]
    =
    \Omega(\sigma).
\]
Consequently,
\[
    \E[K]
    =
    \Omega(n\sigma)
    =
    \Omega(\sqrt{nst}).
\]

If $\mathcal C_i$ denotes the cost of insertion $i$, then
\[
    \frac{\E[K]}{t}
    =
    \frac1t
    \sum_{i=N-t+1}^{N}\E[\mathcal C_i].
\]
Using $t\le bns\epsilon^2/\ell$, we conclude that
\[
    \frac{\E[K]}{t}
    =
    \Omega\!\left(\sqrt{\frac{ns}{t}}\right)
    =
    \Omega\!\left(\frac{\sqrt\ell}{\epsilon}\right)
    =
    \Omega\!\left(
        \frac{\sqrt{\log(1/\epsilon)}}{\epsilon}
    \right).
\]
Thus at least one of the final $t$ insertions has the claimed expected cost.

The online assumption is used precisely when we condition on the first
$N-t$ home locations and the algorithm's random tape: under this conditioning,
$D_0$ is fixed, while the final $t$ home locations remain independent and
uniformly distributed.
\end{proof}

\ifanonymous
\else
\section*{Acknowledgments}
This work was partially supported by the Jeanne~B.\ and Richard~F.\ Berdik ARCS Pittsburgh Endowed Scholar Award, NSF grants CNS-2504471 and CCF-254216, and a Jane Street Research Grant.
\fi

\clearpage
\bibliographystyle{alpha}
\bibliography{main.bib}

\end{document}